\documentclass[onecolumn,showpacs]{revtex4-2}
\usepackage{graphics}
\usepackage{graphicx}
\usepackage{tikz-cd}
\usepackage{dcolumn}
\usepackage{amsmath}
\usepackage{amsthm}
\usepackage{amssymb}
\usepackage{mathtools}
\usepackage{hyperref}

\newtheorem{theorem}{Theorem}[section]
\newtheorem{corollary}[theorem]{Corollary}
\newtheorem{proposition}[theorem]{Proposition}

\newtheorem{definition}[theorem]{Definition}
\allowdisplaybreaks

\newenvironment{remark}[1][Remark]{\begin{trivlist}
\item[\hskip \labelsep {\bfseries #1}]}{\end{trivlist}}

\begin{document}

\title{On some locally symmetric embedded spaces with non-negative scalar curvature and their characterization}

\author{Abbas M \surname{Sherif}$^{1,2,\ast}$, Peter K S Dunsby$^{2,3,\dagger}$, Rituparno Goswami$^{4,\ddagger}$}
\affiliation{$^1$Center for Geometry and Phyics, Institute for Basic Sciences, Pohang University of Science and Technology, 77 Cheongham-ro, Nam-gu, Pohang, Gyeongbuk 37673, South Korea}
\email{abbasmsherif25@ibs.re.kr}

\affiliation{$^2$Cosmology and Gravity Group, Department of Mathematics and Applied Mathematics, University of Cape Town, Rondebosch 7701, South Africa}

\affiliation{$^3$South African Astronomical Observatory, Observatory 7925, Cape Town, South Africa}
\email{peter.dunsby@uct.ac.za}

\affiliation{$^4$Astrophysics Research Center (ARC), School of Mathematics, Statistics and Computer Science, University of KwaZulu-Natal, Private Bag X54001, Durban 4000, South Africa}
\email{vitasta9@gmail.com}

\begin{abstract}
In this work we perform a general study of properties of a class of locally symmetric embedded hypersurfaces in spacetimes admitting a $1+1+2$ spacetime decomposition. The hypersurfaces are given by specifying the form of the Ricci tensor with respect to the induced metric. These are slices of constant time in the spacetime. Firstly, the form of the Ricci tensor for general hypersurfaces is obtained and the conditions under which the general case reduces to those of constant time slices are specified. We provide a characterization of these hypersurfaces, with key physical quantities in the spacetime playing a role in specifying the local geometry of these hypersurfaces. Furthermore, we investigate the case where these hypersurfaces admit a Ricci soliton structure. The particular cases where the vector fields associated to the solitons are Killing or conformal Killing vector fields are analyzed. Finally, in the context of spacetimes with local rotational symmetry it is shown that, only spacetimes in this class with vanishing rotation and spatial twist can admit the hypersurface types considered, and that the hypersurfaces are necessarily flat.  And if such hypersurface do admit a Ricci soliton structure, the soliton is steady, with the components of the soliton field being constants.
\end{abstract} 

\maketitle

\section{Introduction}
\label{intro}

The \textit{Ricci flow} evolution equation, which is given as the set of partial differential equations

\begin{eqnarray}\label{in1}
\frac{\partial\tilde{g}_{\mu\nu}}{\partial t}=-2R_{\mu\nu},
\end{eqnarray}
was introduced by Hamilton \cite{ham1} and used to prove the sphere theorem, with \(\tilde{g}_{\mu\nu}\) denoting a ``time" dependent metric of a manifold with Ricci curvature \(R_{\mu\nu}\). Self similar solutions to \eqref{in1} are the Einstein metrics, while fixed points are Ricci flat manifolds. The flow \eqref{in1} has proved useful in many applications \cite{df1,df2}, including famously aiding the proof of the Poincare conjecture by Perelman  \cite{gp1,gp2,gp3}. A limitation of \eqref{in1} was that it was not geometrical in the sense that it was dependent on the choice of the coordinate parametrising the flow. This led to the modification of \eqref{in1}, known as the \textit{Hamilton-DeTurck} flow, given by
\begin{eqnarray}\label{in2}
\frac{\partial\tilde{g}_{\mu\nu}}{\partial t}=-2R_{\mu\nu}+\mathcal{L}_{\tilde{X}}\tilde{g}_{\mu\nu},
\end{eqnarray}
where \(\tilde{X}\) is a smooth vector field which generates the change of coordinates along the flow (necessarily a diffeomorphism), and \(\mathcal{L}_{\tilde{X}}\) is the Lie derivative along the vector field \(\tilde{X}\). The extra term in \eqref{in2} involving the Lie derivative leaves \eqref{in2} invariant under change of the parameter along the flow. Self similar solutions to \eqref{in2} then solve

\begin{eqnarray}\label{in3}
R_{\mu\nu}+\left(\frac{1}{2}\mathcal{L}_{\tilde{X}}-\varrho\right)\tilde{g}_{\mu\nu}=0,
\end{eqnarray}
where \(\varrho\in\mathbb{R}\). The equation \eqref{in3} is the \textit{Ricci soliton} equation, and naturally generalizes Einstein manifolds. These objects therefore have use in the study of the Ricci geometric flow.

There is a wealth of literature on Ricci flows with most work dedicated to the so-called gradient Ricci solitons. The geometry of these objects have been extensively studied, even more so in the case of three dimensional Riemannian manifolds, where various classification schemes have been provided. In contrast to the \(3\)-dimensional Riemannian case, the Lorentzian case and the case of embedded hypersurfaces of Lorentzian manifolds is less studied, though some interesting results in this regard have been obtained (see \cite{tom1,eric1} and references therein). In dimension two, the Hamilton's cigar soliton \cite{ham1} is the only steady gradient Ricci soliton with positive curvature which is complete (also see \cite{bern1} for more discussion on the classification of \(2\)-dimensional complete Ricci solitons). Complete classification has been provided in dimension three for which the Ricci soliton is shrinking. Due to results by Ivey \cite{iv1}, Perelman \cite{gp1}, Cao \textit{et al.} \cite{cao1} and others, it is known that shrinking Ricci solitons in \(3\)-dimensions are quotients of the \(3\)-sphere \(\mathbb{S}^3\), the cylinder \(\mathbb{R}\times\mathbb{S}^2\) or the Gaussian gradient Ricci soliton on \(\mathbb{R}^3\).

Bryant \cite{bry1} constructed a steady rotationally symmetric gradient Ricci soliton , and Brendle \cite{bre1} showed that this soliton is the only non-flat \(k\)-noncollapsed steady Ricci soliton in dimension three. More recently, classification of the expanding case have been considered under certain integral assumptions on the scalar curvature \cite{cat1,der1} (also, see \cite{oc1,ma1,pet1,wal1,pet2} and associated references for additional results on Ricci solitons).

The subject of geometry of hypersurfaces is an extensively studied area in (pseudo) Riemannian geometry. Besides purely mathematical interests, hypersurfaces play very fundamental roles in various areas of theoretical physics, and a lot of applications can be found especially in the area of Einstein's theory of General Relativity. Cauchy surfaces, for example, are used to formulate the Einstein's equations as an initial value problem. Another area of prominence is black hole horizons (or more generally \textit{marginally trapped tubes}), where these hypersurfaces ``separate" the black hole region from external observers. The geometry and topology of these hypersurfaces have been extensively studied under varying assumptions on the spacetime, albeit from different perspectives.  In fact, there are nice classification results by Hall and Capocci \cite{hac} and Sousa \textit{et al.} \cite{sea}, for 3-dimensional spacetimes which apply to embedded hypersurfaces.

In principle then, the study of Ricci soliton structures on black hole horizons could provide a classification, geometrically, of horizons, which motivates this work. Indeed there might be other geometrical properties one may infer from these hypersurfaces with Ricci soliton structure, including the restrictions placed on the spacetimes in which these hypersurfaces are embedded.

\subsection{Objective of paper}

Effectively employed to study problems in cosmology, astrophysics and perturbation theory (see \cite{pg1,cc1,crb1,gbc1} and references therein), the \(1+1+2\) covariant formalism will be used here to study Ricci soliton structures on a specified class of hypersurfaces embedded in spacetimes admitting a \(1+1+2\) decomposition, as a first step to potential future applications to the study of the geometry of black hole horizons. We will begin by studying some general properties of a class of hypersurfaces, by using the \(1+1+2\) covariant approach, on which we wish to investigate Ricci soliton structure from the equation point of view (without focus on general properties and the dynamics of the Ricci flow geometric equation). In most cases studied, the spacetime is known. Then a choice of the hypersurface is made and studied as a proper subspace of the spacetime. Here, we shall proceed by first prescribing a form of the Ricci tensor for the hypersurfaces, and working out some of the possible restrictions on these hypersurfaces. (In this case the results will be applicable to classes of spacetimes admitting hypersurfaces with the prescribed form of the Ricci curvature.) The most general form of the Ricci tensor for embedded hypersurfaces in \(1+1+2\) decomposed spacetimes is worked out and the conditions reducing the general case to that which is considered throughout this work is specified. We then investigate Ricci soliton structure on these surfaces and see how the nature of the soliton constrains the geometry of the hypersurfaces, as well as physical quantities specifying the hypersurfaces.

\subsection{Outline of paper}

This work has the following outline: in Section \ref{soc1} we present a brief introduction to the formalism of the \(1+1+2\) spacetime decomposition. Section \ref{soc4} presents the form of the Ricci tensor for the class of hypersurfaces to be investigated throughout this work. The associated curvature quantities are then written in terms of the \(1+1+2\) covariant quantities, and the Gauss-Codazzi equations explicitly specified. In Section \ref{soc5} we give a characterisation of the hypersurfaces and study the various constraints induced by properties of the curvature quantities. In Section \ref{soc6} we present a detailed investigation of the case when the considered hypersurfaces admit a Ricci soliton structure. Section \ref{soc7} considers the results from the previous sections in context of a well known class of spacetimes, the locally rotationally symmetric class II spacetimes. Finally, we conclude in Section \ref{soc8} with a summary and discussion of the results obtained in this work.

\section{\(1+1+2\) spacetime decomposition}\label{soc1}

In this section we introduce the \(1+1+2\) covariant splitting of spacetime. We will provide enough details so that those not very familiar with the formalism find it easy to follow the rest of the paper. A great deal of excellent literature exist which details this approach and its applications to relativistic astrophysics and cosmology. The interested reader is referred to \cite{cc1} and references therein.

The procedure for implementing the \(1+1+2\) spacetime decomposition starts with the splitting of the spacetime in the following manner: choose a unit tangent vector field, usually denoted \(u^{\mu}\) satisfying \(u_{\mu}u^{\mu}=-1\), along the observer's congruence. This choice of vector field induces a split of the \(4\)-dimensional metric \(g_{\mu\nu}\) as
\begin{eqnarray}\label{micel1}
h_{\mu\nu}=g_{\mu\nu}+u_{\mu}u_{\nu},
\end{eqnarray}
where the tensor \(h_{\mu\nu}\) projects vectors and tensors, orthogonal to \(u^{\mu}\), onto the \(3\)-space resulting from the \(1+3\) splitting. This projector is the first fundamental form for the \(3\)-space. This splitting introduces two derivatives from the full covariant derivative of the spacetime \(\nabla_{\mu}\):

\begin{enumerate}

\item \textbf{The derivative along the \(u^{\mu}\) direction}: for any tensor \(S^{\mu\dots \nu}_{\ \ \ \ \mu'\dots \nu'}\) one has the derivative

\begin{eqnarray}\label{micel2}
\dot{S}^{\mu\dots \nu}_{\ \ \ \ \mu'\dots \nu'}=u^{\sigma}\nabla_{\sigma}S^{\mu\dots \nu}_{\ \ \ \ \mu'\dots \nu'};
\end{eqnarray}

\item \textbf{The fully orthogonally projected derivative via \(\bf{h^{\mu\nu}}\) on all indices}: for any tensor \(S^{\mu\dots \nu}_{\ \ \ \ \mu'\dots \nu'}\) one has the derivative

\begin{eqnarray}\label{micel3}
D_{\sigma}S^{\mu\dots \nu}_{\ \ \ \ \mu'\dots \nu'}=h^{\mu}_{\ \gamma}h^{\gamma'}_{\ \mu'}\dots h^{\nu}_{\ \delta}h^{\delta'}_{\ \nu'}h^{\varrho}_{\ \sigma}\nabla_{\varrho}S^{\gamma\dots \delta}_{\ \ \ \ \gamma'\dots \delta'}.
\end{eqnarray}

\end{enumerate}
The first derivative is usually associated with the observer's time, called the covariant `\textit{time} derivative. For simplicity it is usually called the dot derivative. The second derivative is called the `\(D\)' derivative.

The next step in the splitting is to make a choice of a normal direction, denoted \(e^{\mu}\), which splits the \(3\)-space, and is orthogonal to \(u^{\mu}\) and satisfies \(e_{\mu}e^{\mu}=1\). This leads to the further splitting of the spacetime metric \(g_{\mu\nu}\) as

\begin{eqnarray}\label{micel4a}
N_{\mu\nu}=g_{\mu\nu}+u_{\mu}u_{\nu}-e_{\mu}e_{\nu},
\end{eqnarray}
where the tensor \(N_{\mu\nu}\) projects to \(2\)-surfaces (referred to as the sheet), vectors and tensors orthogonal to \(e^{\mu}\). The further two derivatives are also introduced:

\begin{enumerate}

\item \textbf{The derivative along the \(e^{\mu}\) direction}: for any \(3\)-tensor \(S^{\mu\dots \nu}_{\ \ \ \ \mu'\dots \nu'}\) one has the derivative

\begin{eqnarray}\label{micel100w}
\hat{S}^{\mu\dots \nu}_{\ \ \ \ \mu'\dots \nu'}=e^{\sigma}\nabla_{\sigma}S^{\mu\dots \nu}_{\ \ \ \ \mu'\dots \nu'};
\end{eqnarray}

\item \textbf{The fully projected spatial derivative on the \(2\)-sheet via \(\bf{N^{\mu\nu}}\) on all indices}: for any \(3\)-tensor \(S^{\mu\dots \nu}_{\ \ \ \ \mu'\dots \nu'}\) one has the derivative

\begin{eqnarray}\label{micel101w}
\delta_{\sigma}S^{\mu\dots \nu}_{\ \ \ \ \mu'\dots \nu'}=N^{\mu}_{\ \gamma}N^{\gamma'}_{\ \mu'}\dots N^{\nu}_{\ \delta}N^{\delta'}_{\ \nu'}N^{\varrho}_{\ \sigma}D_{\varrho}S^{\gamma\dots \delta}_{\ \ \ \ \gamma'\dots \delta'}
\end{eqnarray}

\end{enumerate}
The first case is usually called the `\textit{hat}' derivative, and the second the `\textit{delta}' derivative. The volume element of the \(2\)-surfaces resulting from the further splitting of the \(3\)-space, is the Levi-Civita tensor

\begin{eqnarray*}
\varepsilon_{\mu\nu}\equiv\varepsilon_{\mu\nu\delta}e^{\delta}=u^{\gamma}\eta_{\gamma\mu\nu\delta}e^{\delta},
\end{eqnarray*}
so that contracting \(\varepsilon_{\mu\nu}\) with \(u^{\nu}\) or \(e^{\nu}\) gives zero. The tensor \(\varepsilon_{\mu\nu}\) also satisfies the additional relations:

\begin{eqnarray*}
\begin{split}
\varepsilon_{\mu\nu\delta}&=e_{\mu}\varepsilon_{\nu\delta}-e_{\nu}\varepsilon_{\mu\delta}+e_{\delta}\varepsilon_{\mu\nu},\\
\varepsilon_{\mu\nu}\varepsilon^{\delta\gamma}&=2N_{[\mu}^{\ \delta}N_{\nu]}^{\ \gamma},\\
\varepsilon_{\mu}^{\ \sigma}\varepsilon_{\sigma\nu}&=N_{\mu\nu},\\
\varepsilon_{\mu\nu}\varepsilon^{\mu\nu}&=2,
\end{split}
\end{eqnarray*}

where the square brackets denote the usual antisymmetrization.

Now, let \(S^{\mu}\) be a \(3\)-vector. Then \(S^{\mu}\) may be irreducibly split as

\begin{eqnarray}\label{micel4}
S^{\mu}=\bf{S}e^{\mu}+\bf{S}^{\mu},
\end{eqnarray}
where \(\bf{S}\) is the scalar associated to \(S^{\mu}\) that lies along \(e^{\mu}\), and \(\bf{S}^{\mu}\) lies in the sheet orthogonal to \(e^{\mu}\). Notice that \eqref{micel4} implies the following:

\begin{eqnarray}\label{micel5}
\bf{S}\equiv S_{\mu}e^{\mu}\quad\mbox{and}\quad\bf{S}^{\mu}\equiv S_{\nu}N^{\mu\nu}\equiv S^{\bar{\mu}},
\end{eqnarray}
where the overbar indicates the index projected by \(N^{\mu\nu}\). In a similar manner, projected, symmetric and trace free tensors \(S_{\mu\nu}\) can be irreducibly split as

\begin{eqnarray}\label{micel6}
S_{\mu\nu}=S_{\langle\mu\nu\rangle}=\bf{S}\left(e_{\mu}e_{\nu}-\frac{1}{2}N_{\mu\nu}\right)+2S_{(\mu}e_{\nu)}+\bf{S}_{\mu\nu},
\end{eqnarray}
where we have

\begin{eqnarray*}
\begin{split}
\bf{S}&\equiv e^{\mu}e^{\nu}S_{\mu\nu}=-S_{\mu\nu}N^{\mu\nu},\\
\bf{S}_{\mu}&\equiv e^{\delta}S_{\nu\delta}N_{\mu}^{\ \nu}=\bf{S}_{\bar{\mu}},\\
\bf{S}_{\mu\nu}&\equiv S_{\langle\mu\nu\rangle}\equiv\left(N_{(\mu}^{\ \delta}N_{\nu)}^{\ \gamma}-N_{\mu\nu}N^{\delta\gamma}\right)S_{\delta\gamma},
\end{split}
\end{eqnarray*}
and the round brackets denoting symmetrization. The angle bracket is being used here to denote the projected, symmetric and trace-free parts of a tensor.

Now we can write down the definition of various scalars, vectors and tensors that will appear in the rest of the paper. We have

\begin{eqnarray*}
\begin{split}
\hat{e}_{\mu}&=e^{\nu}D_{\nu}e_{\mu}\equiv a_{\mu},\\
\dot{e}_{\mu}&=\mathcal{A}u_{\mu}+\alpha_{\mu},\\
\hat{u}_{\mu}&=\left(\frac{1}{3}\Theta+\Sigma\right)e_{\mu}+\Sigma_{\mu}+\varepsilon_{\mu\nu}\Omega^{\nu},\\
\dot{u}_{\mu}&=\mathcal{A}e_{\mu}+\mathcal{A}_{\mu}.
\end{split}
\end{eqnarray*}    
Here the scalar \(\mathcal{A}\) is the acceleration (thought of as the radial component of the acceleration of the unit timelike vector \(u^{\mu}\)). The vector \(a^{\mu}\) is interpreted as the acceleration of \(e^{\mu}\); \(\Sigma\) denotes the scalar associated to the projected, symmetric and trace-free shear tensor \(\sigma_{\mu\nu}\equiv D_{\langle\mu}u_{\nu\rangle}=\Sigma\left(e_{\mu}e_{\nu}-\frac{1}{2}N_{\mu\nu}\right)+2\Sigma_{(\mu}e_{\nu)}+\Sigma_{\mu\nu}\); \(\Theta\equiv D_{\mu}u^{\mu}\) is the expansion; \(\Omega^{\mu}\) is the part of the rotation vector \(\omega^{\mu}=\Omega e^{\mu}+\Omega^{\mu}\) lying in the sheet orthogonal to \(e^{\mu}\); and \(\alpha_{\mu}\equiv \dot{e}_{\bar{\mu}}\).

We also have the following quantities

\begin{eqnarray*}
\begin{split}       
\rho&=T_{\mu\nu}u^{\mu}u^{\nu},\quad\mbox{energy density}\\
p&=\frac{1}{3}h^{\mu\nu}T_{\mu\nu},\quad\mbox{isotropic pressure}\\
\phi&\equiv\delta_{\mu}e^{\mu}\quad\mbox{sheet expansion},\\
\xi&\equiv\frac{1}{2}\varepsilon^{\mu\nu}\delta_{\mu}e_{\nu}\quad\mbox{sheet/spatial twist},\\
\zeta_{\mu\nu}&\equiv\delta_{\langle\mu}e_{\nu\rangle}\quad\mbox{the shear of \(e_{\mu}\)},\\
q_{\mu}&= -h^{\sigma}_{\ \mu}T_{\sigma\nu}u^{\nu}=Qe_{\mu}+Q_{\mu}\quad\mbox{heat flux},\\
\pi_{\mu\nu}&\equiv\Pi\left(e_{\mu}e_{\nu}-\frac{1}{2}N_{\mu\nu}\right)+2\Pi_{(\mu}e_{\nu)}+\Pi_{\mu\nu}.\quad\mbox{anisotropic stress},\\
E_{\mu\nu}&\equiv\mathcal{E}\left(e_{\mu}e_{\nu}-\frac{1}{2}N_{\mu\nu}\right)+2\mathcal{E}_{(\mu}e_{\nu)}+\mathcal{E}_{\mu\nu}.\quad\mbox{electric Weyl},\\
H_{\mu\nu}&\equiv\mathcal{H}\left(e_{\mu}e_{\nu}-\frac{1}{2}N_{\mu\nu}\right)+2\mathcal{H}_{(\mu}e_{\nu)}+\mathcal{H}_{\mu\nu}.\quad\mbox{magnetic Weyl}.
\end{split}
\end{eqnarray*}

The full covariant derivatives of the vectors \(u^{\mu}\) and \(u^{\mu}\) are given respectively by

\begin{subequations}
\begin{align}
\nabla_{\mu}u_{\nu}&=-u_{\mu}\left(Ae_{\nu}+A_{\nu}\right)+\left(\frac{1}{3}\Theta+\Sigma\right)e_{\mu}e_{\nu}+\left(\Sigma_{\nu}+\varepsilon_{\nu\delta}\Omega^{\delta}\right)e_{\mu}+\left(\Sigma_{\mu}-\varepsilon_{\mu\delta}\Omega^{\delta}\right)e_{\nu}+\Omega\varepsilon_{\mu\nu}+\Sigma_{\mu\nu}\nonumber\\
&+\frac{1}{2}\left(\frac{2}{3}\Theta-\Sigma\right)N_{\mu\nu},\label{micel20}\\
\nabla_{\mu}e_{\nu}&=-Au_{\mu}u_{\nu}-u_{\mu}\alpha_{\nu}+\left(\frac{1}{3}\Theta+\Sigma\right)e_{\mu}u_{\nu}+\left(\Sigma_{\mu}-\varepsilon_{\mu\delta}\Omega^{\delta}\right)u_{\nu}+e_{\mu}a_{\nu}+\frac{1}{2}\phi N_{\mu\nu}+\xi\varepsilon_{\mu\nu}+\zeta_{\mu\nu}.\label{mice21}
\end{align}
\end{subequations}

\section{The curvature and some related tensors}\label{soc4}

This section presents the tensors to be utilised in this work as well as equations obtained from contractions of some key identities.

We will not attempt to obtain all of the field equations here. For this, we refer the reader to \cite{cc1} for the detailed obtention of all of the field equations from the \(1+1+2\) splitting. Indeed, the functions \(\lambda\) and \(\beta\) appearing in the form of the Ricci tensor defined below \eqref{pa1} will be written down shortly in terms of the \(1+1+2\) quantities.

As shall be seen, the embedded hypersurfaces to be considered in this work are characterised by just few scalars, and we are interested in how constraints on these scalars determine the local dynamics, as well as the geometry of these hypersurfaces.

In this work we are interested in spacetimes where the Ricci tensor of embedded \(3\)-manifolds (throughout this work embedded \(3\)-manifolds will be horizons from context) assumes the general form

\begin{eqnarray}\label{pa1}
\begin{split}
^3R_{\mu\nu}&=\lambda e_{\mu}e_{\nu}+\beta N_{\mu\nu}\\
&=\left(\lambda-\beta\right)e_{\mu}e_{\nu}+\beta h_{\mu\nu},
\end{split}
\end{eqnarray}
in which case we may write the trace-free part of \eqref{pa1} as

\begin{eqnarray}\label{pak1}
^3S_{\mu\nu}=\left(\lambda-\beta\right)\left(e_{\mu}e_{\nu}-\frac{1}{3}h_{\mu\nu}\right).
\end{eqnarray}
(The 3-dimensional curvatures will be labeled with an upper left superscript 3.) We have the Riemann curvature tensor as

\begin{eqnarray}\label{pa2}
\begin{split}
^3R_{\mu\nu\delta\gamma}&=^3R_{\mu\delta}h_{\nu\gamma}-^3R_{\mu\gamma}h_{\delta\nu}+^3R_{\nu\gamma}h_{\mu\delta}-^3R_{\delta\nu}h_{\mu\gamma}-\frac{^3R}{2}\left(h_{\mu\delta}h_{\nu\gamma}-h_{\mu\gamma}h_{\delta\nu}\right)\\
&=\left(N_{\delta\nu}+e_{\delta}e_{\nu}\right)\biggl[\frac{\lambda}{2}N_{\gamma\mu}+\left(\beta-\frac{\lambda}{2}\right)e_{\gamma}e_{\mu}\biggr] -\left(N_{\gamma\nu}+e_{\gamma}e_{\nu}\right)\biggl[\frac{\lambda}{2}N_{\delta\mu}+\left(\beta-\frac{\lambda}{2}\right)e_{\delta}e_{\mu}\biggr]\\
&+\left(N_{\delta\mu}+e_{\delta}e_{\mu}\right)\left(\beta N_{\gamma\nu}+\lambda e_{\gamma}e_{\nu}\right) -\left(N_{\gamma\mu}+e_{\gamma}e_{\mu}\right)\left(\beta N_{\delta\nu}+\lambda e_{\delta}e_{\nu}\right),
\end{split}
\end{eqnarray}
where \(h_{\mu\nu}\) is the metric induced from the ambient spacetime on the embedded \(3\)-manifold, and \(R\) is the scalar curvature given by

\begin{eqnarray}\label{pa3}
^3R=\lambda+2\beta.
\end{eqnarray}
Let us note here the well known fact that in dimension \(3\), the Weyl tensor vanishes identically. We shall also assume smoothness of the functions \(\lambda\) and \(\beta\), which are functions of the covariant geometric and matter variables.

The Cotton tensor, fully projected to the hypersurface, is given by

\begin{eqnarray}\label{pa4}
\begin{split}
C_{\mu\nu\delta}&=D_{\delta}\left(^3R_{\mu\nu}\right)-D_{\nu}\left(^3R_{\mu\delta}\right)+\frac{1}{4}\left(D_{\nu}\left(^3R\right)h_{\mu\delta}-D_{\delta}\left(^3R\right)h_{\mu\nu}\right)\\
&=2\left(\lambda-\beta\right)\left(e_{\mu}D_{[\delta}e_{\nu]}+e_{[\nu}D_{\delta]}e_{\mu}\right)+2\left[e_{\mu}e_{[\nu}D_{\delta]}+\frac{1}{4}h_{\mu[\delta}D_{\nu]}\right]\lambda+2\left[\left(h_{\mu[\nu}-e_{\mu}e_{[\nu}\right)D_{\delta]}+\frac{1}{2}h_{\mu[\delta}D_{\nu]}\right]\beta\;,
\end{split}
\end{eqnarray}
where we have used the fully orthogonally projected derivative \(D\) (this is the compatible covariant derivative on the hypersurfaces under consideration), and \(C_{\mu\nu\delta}\) is antisymmetric in \(\nu\) and \(\delta\). For dimension \(3\), the cotton tensor can be presented as a tensor density in the form

\begin{eqnarray}\label{pa6}
\begin{split}
C_{\mu}^{\ \nu}&=D_{\delta}\left(^3R_{\sigma\mu}-\frac{1}{4}\left(^3R\right)h_{\sigma\mu}\right)\varepsilon^{\delta\sigma\nu}\\
&=\biggl[\left(\lambda-\beta\right)\left(e_{\sigma}D_{\delta}e_{\mu}+e_{\mu}D_{\delta}e_{\sigma}\right)+\left(e_{\sigma}e_{\mu}-\frac{1}{4}h_{\sigma\mu}\right)D_{\delta}\lambda-\left(e_{\sigma}e_{\mu}-\frac{1}{2}h_{\sigma\mu}\right)D_{\delta}\beta\biggr]\varepsilon^{\delta\sigma\nu}\;,
\end{split}
\end{eqnarray}
sometimes referred to as the Cotton-York tensor.

\subsection{The Ricci tensor for general hypersurfaces in $1+1+2$ spacetimes}

In principle, one may assume any form of the Ricci tensor on a general hypersurface, and study physics on it. However, it may not necessarily be the case that there exists a spacetime in which this hypersurface is embedded. In this section, we provide a minimum set of conditions which are to be satisfied for a hypersurface with Ricci tensor of form \eqref{pa1} to be embedded in a \(1+1+2\) decomposed spacetime.  

Let \(M\) be a spacetime admitting a \(1+1+2\) decomposition, and denote by \(\Xi\subset M\) a codimension \(1\) embedded submanifold of \(M\) (from now onwards hypersurfaces will be denoted by \(\Xi\)). In general, one writes the curvature of \(M\) as (see \cite{gfr1} and references therein)

\begin{eqnarray}\label{3.1}
\begin{split}
R^{\mu\nu}_{\ \ \delta\sigma}&=4u^{[\mu}u_{[\delta}\left(E^{\nu]}_{\ \sigma]}-\frac{1}{2}\pi^{\nu]}_{\ \sigma]}\right)+4\bar{h}^{[\mu}_{\ [\delta}\left(E^{\nu]}_{\ \sigma]}+\pi^{\nu]}_{\ \sigma]}\right)\\
&+2\varepsilon^{\mu\nu\gamma}u_{[\delta}\left(H_{\sigma]\gamma}+\frac{1}{2}\varepsilon_{\sigma]\gamma\varrho}q^{\varrho}\right)+2\varepsilon_{\delta\sigma\gamma}u^{[\mu}\left(H^{\nu\gamma}+\frac{1}{2}\varepsilon^{\nu]\gamma\varrho}q_{\varrho}\right)+\frac{2}{3}\left(\rho+3p-2\Lambda\right)u^{[\mu}u_{[\delta}\bar{h}^{\nu]}_{\ \sigma]}+\frac{2}{3}\left(\rho+\Lambda\right) \bar{h}^{\mu}_{\ [\delta}\bar{h}^{\nu}_{\ \sigma]},
\end{split}
\end{eqnarray}
where \(\Lambda\) is the cosmological constant and \(\bar{h}^{\mu\nu}=g^{\mu\nu}+u^{\mu}u^{\nu}\). One may write \eqref{3.1} in its full \(1+1+2\) form by decomposing \(E_{\mu\nu},H_{\mu\nu}\) and \(\pi_{\mu\nu}\) using \eqref{micel6}, and \(q_{\mu}\) using \eqref{micel4}. We define the curvature quantities on \(\Xi\) as follows \cite{ge2}: define a normal to \(\Xi\) as

\begin{eqnarray}\label{3.2}
n^{\mu}=a_1u^{\mu}+a_2e^{\mu}.
\end{eqnarray}
(Whether \(n^{\mu}\) is spacelike of timelike willmplace some constraints on \(a_1\) and \(a_2\).) The first fundamental form of \(\Xi\) is given by

\begin{eqnarray}\label{3.3}
h_{\mu\nu}=g_{\mu\nu}\mp n_{\mu}n_{\nu}=N_{ab}-\left(1\pm a_1^2\right)u_{\mu}u_{\nu}+\left(1\mp a_2^2\right)e_{\mu}e_{\nu}\mp 2a_1a_2u_{(\mu}e_{\nu)},
\end{eqnarray}
where the choice of the  ``-" or ``+" sign depends of the whether \(\Xi\) is \textit{timelike} or \textit{spacelike} respectively. Note the relationship between the tensors \(\bar{h}_{\mu\nu}\) and \(h_{\mu\nu}\) here:

\begin{eqnarray}\label{yes}
\bar{h}_{\mu\nu}=h_{\mu\nu}\Big|_{a_1=\pm 1;\ a_2=0}.
\end{eqnarray}
(In this case it is the ``-" sign that is chosen for \(\left(1\pm a_1^2\right)\).) We also note that

\begin{eqnarray*}
\left(E_{\mu\nu},H_{\mu\nu},\pi_{\mu\nu}\right)u^{\mu}=\left(E_{\mu\nu},H_{\mu\nu},\pi_{\mu\nu}\right)\bar{h}^{\mu\nu}=0.
\end{eqnarray*}
The second fundamental form is then calculated as

\begin{eqnarray}\label{3.4}
\begin{split}
\chi_{\mu\nu}&=h^{\delta}_{\ (\mu}h^{\sigma}_{\ \nu)}\nabla_{\delta}n_{\sigma}\\
&=Z_1u_{\mu}u_{\nu}-Z_2e_{\mu}e_{\nu}\mp Z_3u_{(\mu}e_{\nu)}+\frac{1}{2}\left(a_1\left(\frac{2}{3}\Theta-\Sigma\right)+a_2\phi\right)N_{\mu\nu}\\
&+a_1\Sigma_{\mu\nu}+a_2\zeta_{\mu\nu}+\left(1\pm a_1^2\right)Z'_{\mu\nu}+\left(1\mp a_2^2\right)\left[a_1Z_{\mu\nu}+a_2\bar{Z}_{\mu\nu}\right],
\end{split}
\end{eqnarray}
where we have defined the quantities

\begin{subequations}
\begin{align}
Z_1&=\left(1\pm a_1^2\right)\left(a_1a_2\dot{a}_2-a_1a_2\hat{a}_1-a_2A-\left(1\pm a_1^2\right)\dot{a}_1\right)+a_1a_2^2\left(a_1\hat{a}_2-\left(\frac{1}{3}\Theta+\Sigma\right)\right),\label{3.20a}\\
Z_2&=\left(1\mp a_2^2\right)\left(a_1a_2\dot{a}_2-a_1a_2\hat{a}_1-a_1\left(\frac{1}{3}\Theta+\Sigma\right)-\left(1\mp a_2^2\right)\hat{a}_2\right)+a_1^2a_2\left(a_2\dot{a}_1+A\right),\label{3.21b}\\
Z_3&=\left(1\pm a_1^2\right)\left(\left(1\mp a_2^2\right)\left(\hat{a}_1-\dot{a}_2\right)-2a_1a_2\dot{a}_1-a_1A\right)+\left(1\mp a_2^2\right)\biggl(a_2\left(\frac{1}{3}\Theta+\Sigma\right)-2a_1a_2\hat{a}_2\biggr)\nonumber\\
&+a_1^2a_2\left(a_2\left(\dot{a}_2-\hat{a}_1\right)-\left(\frac{1}{3}\Theta+\Sigma\right)\right),\label{3.22}\\
Z_{\mu\nu}&=2\left(\Sigma_{(\nu}-\varepsilon_{\nu\delta(\nu}\Omega^{\delta}+a_2a_{(\nu}\right)e_{\mu)},\label{3.23}\\
\bar{Z}_{\mu\nu}&=2\left(\Sigma_{(\mu}+\varepsilon_{\delta(\mu}\Omega^c+\delta_{(\mu}a_2\right)e_{\nu)},\label{3.24}\\
Z'_{\mu\nu}&=2\left(u_{(\nu}\delta_{\mu)}a_1-u_{(\mu}\alpha_{\nu)}-a_1u_{(\mu}A_{\nu)}\right).\label{3.25}
\end{align}
\end{subequations}
From \eqref{3.1}, we have the Ricci curvature on \(M\) as

\begin{eqnarray}\label{3.5}
R_{ab}=\pi_{ab}+2q_{(a}u_{b)}-\frac{1}{2}\left(\rho+3p-2\Lambda\right)\left(\frac{1}{3}\bar{h}_{ab}-u_au_b\right)+\frac{2}{3}\left(\rho+\Lambda\right)\bar{h}_{ab}.
\end{eqnarray}
The scalar curvature is therefore

\begin{eqnarray}\label{3.6}
R=\rho-3p+4\Lambda.
\end{eqnarray}
The relationship between the curvature of \(\Xi\), \(^3R^{\mu}_{\ \nu\delta\sigma}\), and the curvature of \(M\), \(R^{\mu}_{\ \nu\delta\sigma}\), is given by \cite{ge2}

\begin{eqnarray}\label{3.7}
^3R^{\mu}_{\ \nu\delta\sigma}=R^{\bar{\mu}}_{\ \bar{\nu}\bar{\delta}\bar{\sigma}}h^{\mu}_{\ \bar{\mu}}h^{\bar{\nu}}_{\ 
\nu}h^{\bar{\delta}}_{\ \delta}h^{\bar{\sigma}}_{\ \sigma}\pm \chi^{\mu}_{\ \delta}\chi_{\nu\sigma} \mp \chi^{\mu}_{\ \sigma}\chi_{\nu\delta},
\end{eqnarray}
where, from now on, we are denoting all curvature quantities associated to \(\Xi\) with an overhead `\textit{tilde}'. From \eqref{3.6} we obtain

\begin{eqnarray}\label{3.8}
^3R_{\mu\nu}&=\underline{R}_{\mu\nu}\mp \bar{R}_{\mu\nu}\pm\chi\left(\chi_{\mu\nu}\right) \mp \underline{\chi}_{\mu\nu},
\end{eqnarray}
with the associated scalar curvature is

\begin{eqnarray}\label{3.9}
^3R&=\left(R\pm\chi^2\right)\mp\left(R_*+\bar{\chi}\right).
\end{eqnarray}
where we have defined

\begin{subequations}
\begin{align}
\chi&=\chi^{\mu}_{\ \mu}=-Z_1-Z_2+a_1\left(\frac{2}{3}\Theta-\Sigma\right)+a_2\phi,\label{3.10}\\
\bar{\chi}&=\chi_{\mu\nu}\chi^{\mu\nu}=Z_1^2+Z^2-Z_3^2+a_1^2\left(\frac{1}{2}\left(\frac{2}{3}\Theta-\Sigma\right)^2+2\Omega^2\right)+a_2^2\left(\frac{1}{2}\phi^2+2\xi^2\right)+2a_1^2\left(a_{\mu}\Sigma^{\mu}+\varepsilon_{\mu\nu}\left(\Sigma^{\mu}+a_{\mu}a^{\mu}\right)\Omega^{\nu}\right)\nonumber\\
&+\left(1\mp a_2^2\right)^2\biggl[\left(a_1^2+a_2^2\right)\left(\Sigma_{\mu}\Sigma^{\mu}+\Omega_{\mu}\Omega^{\mu}\right)+a_2^2\biggl(a_1^2a_{\mu}a^{\mu}+\delta_{\mu}a_2\delta^{\mu}a_2\biggr)+2a_2^2\left(\delta_{\mu}a_2\Sigma^{\mu}-\varepsilon_{\mu\nu}\left(\Sigma^{\mu}+\delta^{\mu}a_2\right)\Omega^{\nu}\right)\biggr]\nonumber\\
&+2a_1a_2\left(\frac{1}{2}\phi\left(\frac{2}{3}\Theta-\Sigma\right)+2\Omega\xi\right)-\left(1\pm a_1^2\right)^2\biggl(\delta_{\mu}a_1\delta^{\mu}a_1+\alpha_{\mu}\alpha^{\mu}+a_1^2A_{\mu}A^{\mu}+2a_1\alpha_{\mu}A^{\mu}\biggr)+a_1^2\Sigma_{\mu\nu}\Sigma^{\mu\nu}\nonumber\\
&+a_2^2\zeta_{\mu\nu}\zeta^{\mu\nu},\label{3.11}\\
\underline{\chi}_{\mu\nu}&=\chi^{\delta}_{\ \mu}\chi_{\delta\nu}=\left[-Z_1^2\mp Z_3^2+\left(1\pm a_1^2\right)^2\delta_aa_1\delta^aa_1\right]u_{\mu}u_{\nu}+\biggl[Z_2^2\pm\frac{1}{4}Z_3^2+a_2^2\left(1\pm a_2^2\right)^2V\biggr]e_{\mu}e_{\nu}\nonumber\\
&+2\left(1\pm a_1^2\right)\left(\alpha_{(\nu}+a_1A_{(\nu}\right)\left(Z_1u_{\mu)}+\frac{1}{2}Z_3e_{\mu)}\right)+2\left(1\pm a_1^2\right)\left(a_1\Sigma_{\delta(\mu}+a_2\zeta_{\delta(\mu}\right)u_{\nu)}\delta^{\delta}a_1\nonumber\\
&+2\left[\pm\frac{1}{2}Z_3\left(Z_1+Z_2\right)+a_2\left(1\pm a_1^2\right)\left(1\mp a_2^2\right)\delta_{\delta}a_1\left(\Sigma^{\delta}-\varepsilon^{\delta\sigma}\Omega_{\sigma}+\delta^{\delta}a_2\right)\right]u_{(\mu}e_{\nu)}\nonumber\\
&+\left(1\pm a_1^2\right)^2\left(a_1^2A_{\mu}A_{\nu}-\alpha_{\mu}\alpha_{\nu}-2a_1\alpha_{(\mu}A_{\nu)}\right)+a_1\left(1\mp a_2^2\right)\left(\Sigma_{\nu}-\varepsilon_{\delta(\nu}\Omega^{\delta}+a_2a_{(\nu}\right)\biggl(\frac{1}{2}Z_3u_{\mu)}-Z_2e_{\mu)}\biggr)\nonumber\\
&+\frac{1}{2}\left(a_1\left(\frac{2}{3}\Theta-\Sigma\right)+a_2\phi\right)V_{\mu\nu}+2a_2^2\left(1\mp a_2^2\right)\left(\Sigma^{\delta}-\varepsilon^{\delta\sigma}\Omega_{\sigma}+\delta^{\delta}a_2\right)\zeta_{\delta(\mu}e_{\nu)}+\left(a_1\Omega+a_2\xi\right)\bar{V}_{\mu\nu}\nonumber\\
&+2a_1a_2\left[\zeta^{\delta}_{(\mu}\Sigma_{\nu)\delta}+\left(1\mp a_2^2\right)\left(\Sigma^{\delta}-\varepsilon^{\delta\sigma}\Omega_{\sigma}+\delta^{\delta}a_2\right)\Sigma_{\delta(\mu}e_{\nu)}\right],\label{3.12}\\
\underline{R}_{\mu\nu}&=R_{\bar{\nu}\bar{\sigma}}h^{\bar{\nu}}_{\ \mu}h^{\bar{\sigma}}_{\ \nu}=\frac{1}{2}\left(\rho-p-\Pi+2\Lambda\right)N_{\mu\nu}+\frac{1}{2}\biggl[\left(1\pm a_1^2\right)\biggl(\left(1\pm a_1^2\right)\left(\rho+3p-2\Lambda\right)\mp a_1a_2Q\biggr)\nonumber\\
&\mp a_1a_2\left(\left(1\pm a_1^2\right)Q\mp a_1a_2\left(\rho-p+2\Pi+2\Lambda\right)\right)\biggr]u_{\mu}u_{\nu}+\left(Q_{(\mu}+2\Pi_{(\mu}\right)\left[\mp a_1a_2u_{\nu)}+\left(1\mp a_2^2\right)e_{\nu)}\right]\nonumber\\
&+2\left[\left(1\mp a_2^2\right)u_{(\mu}\mp a_1a_2 e_{(\mu}\right]\Pi_{\nu)}+\frac{1}{2}\biggl[\left(1\mp a_2^2\right)\left(\left(1\mp a_2^2\right)\left(\rho-p+2\Pi+2\Lambda\right)\pm a_1a_2Q\right)\nonumber\\
&+2\left[\left(1\pm a_1^2\right)u_{(\mu}\pm a_1a_2 e_{(\mu}\right]Q_{\nu)}\pm a_1a_2\left(\left(1\mp a_2^2\right)Q\pm a_1a_2\left(\rho+3p-2\Lambda\right)\right)\biggr]e_{\mu}e_{\nu}\nonumber\\
&+\biggl[\left(1\mp a_2^2\right)\biggl(\left(1\pm a_1^2\right)Q\mp a_1a_2\left(\rho-p+2\Pi+2\Lambda\right)\biggr)\pm a_1a_2\left(\left(1\pm a_1^2\right)\left(\rho+3p-2\Lambda\right)\mp a_1a_2Q\right)\biggr]u_{(\mu}e_{\nu)},\label{3.13}\\
\bar{R}_{\mu\nu}&=R^{\bar{\mu}}_{\ \bar{\nu}\bar{\delta}\bar{\sigma}}n_{\bar{\mu}}n^{\bar{\delta}}h^{\bar{\nu}}_{\ 
\nu}h^{\bar{\sigma}}_{\ \mu}=\frac{1}{2}\biggl[\left(a_1a_2Q+\left(a_2^2-1\right)\mathcal{E}+\frac{1}{2}\left(a_2^2+1\right)\Pi\right)+\frac{1}{3}\left(\rho+3p-2\Lambda\right)\biggr]N_{\mu\nu}\nonumber\\
&+a_2^2\left[\left(\mathcal{E}-\frac{1}{2}\Pi\right)+\frac{1}{6}\left(\rho+3p-2\Lambda\right)\right]u_{\mu}u_{\nu}+a_1^2\biggl[\left(\mathcal{E}-\frac{1}{2}\Pi\right)+\frac{1}{6}\left(\rho+3p-2\Lambda\right)\biggr]e_{\mu}e_{\nu}\nonumber\\
&+2a_1a_2\left[\left(\mathcal{E}-\frac{1}{2}\Pi\right)+\frac{1}{6}\left(\rho+3p-2\Lambda\right)\right]u_{(\mu}e_{\nu)}+\frac{1}{4}a_2^2\biggl(\mathcal{E}_{\mu\nu}+\frac{1}{2}\Pi_{\mu\nu}\biggr)\nonumber\\
&+2\left[a_1a_2\left(\mathcal{E}_{(\nu}-\frac{1}{2}\Pi_{(\nu}\right)-\frac{1}{2}a_2^2\left(1\pm a_1^2\right)\left(\mathcal{H}^{\delta}\varepsilon_{\delta(\nu}-\frac{1}{2}Q_{(\nu}\right)\mp a_1^2a_2^2\mathcal{H}^{\delta}\varepsilon_{\delta(\nu}\right]u_{\mu)}\nonumber\\
&+2\left[a_1^2\left(\mathcal{E}_{(\nu}-\frac{1}{2}\Pi_{(\nu}\right)\mp a_1a_2^3\left(\mathcal{H}^{\delta}\varepsilon_{\delta(\nu}-\frac{1}{2}Q_{(\nu}\right) + a_1a_2\left(1\mp a_2^2\right)\mathcal{H}^{\delta}\varepsilon_{\delta(\nu}\right]e_{\mu)}\nonumber\\
&+2\left[a_1a_2\left(\mathcal{E}_{(\mu}-\frac{1}{2}\Pi_{(\mu}\right)+\frac{1}{2}a_2^2\left(1+a_1^2\pm a_1^2\right)\left(\mathcal{H}^{\delta}\varepsilon_{\delta(\mu}-\frac{1}{2}Q_{(\mu}\right)\right]u_{\nu)}\nonumber\\
&+2\left[a_1^2\left(\mathcal{E}_{(\mu}-\frac{1}{2}\Pi_{(\mu}\right)+\frac{1}{2}a_1a_2\left(1-a_2^2\mp a_2^2\right)\left(\mathcal{H}^{\delta}\varepsilon_{\delta(\mu}-\frac{1}{2}Q_{(\mu}\right)\right]e_{\nu)},\label{3.14}\\
R_*&=R_{\mu\nu}n^{\mu}n^{\nu}=\frac{1}{2}a_2^2\left(\rho+p+2\Lambda\right)-\frac{1}{2}a_1^2\left(\rho+3p-2\Lambda\right)+\frac{1}{2}a_2Q\left(3a_2-a_1\right)+a_2^2\Pi+a_1^2\left(\mathcal{E}_{\mu\nu}-\frac{1}{2}\Pi_{\mu\nu}\right),\label{3.15}
\end{align}
\end{subequations}
with

\begin{subequations}
\begin{align}
V&=\Sigma_{\mu}\Sigma^{\mu}+\Omega_{\mu}\Omega^{\mu}+\delta_{\mu}a_1\delta^{\mu}a_1-2\Sigma_{\mu}\varepsilon^{\mu\nu}\Omega_{\nu},\label{3.16}\\
V_{\mu\nu}&=\frac{1}{2}\left(a_1\left(\frac{2}{3}\Theta-\Sigma\right)+a_2\phi\right)N_{\mu\nu}+2a_2\left(1\mp a_2^2\right)\left(\Sigma_{(\mu}+\Sigma_{\delta(\mu}\Omega^{\delta}+\delta_{(\mu}\right)e_{\nu)}+2\left(a_1\Sigma_{\mu\nu}+a_2\zeta_{\mu\nu}\right),\label{3.17}\\
\bar{V}_{\mu\nu}&=2\left(a_1\Omega+a_2\xi\right)N_{\mu\nu}+2\varepsilon_{\delta(\mu}\biggl[a_2\left(1\mp a_2^2\right)\left(\Sigma^{\delta}-\varepsilon^{\delta\sigma}\Omega_{\sigma}+\delta^{\delta}a_2\right)e_{\nu)}+\left(1\pm a_1^2\right)u_{\nu)}\delta^{\delta}a_1+a_1\Sigma_{\nu)}^{\ \delta}+a_2\zeta_{\nu)}^{\ \delta}\biggr].\label{3.18}
\end{align}
\end{subequations}

Now, notice first of all that \eqref{pa1} has no mixed term or \(u_{\mu}u_{\nu}\) term. Also, there are no terms constructed from product of \(2\)-vectors or tensor quantities. The choice of the first fundamental form is just \(\bar{h}_{\mu\nu}=g_{\mu\nu}+u_{\mu}u_{\nu}\), so that \(a_2=0\) and \(a_1=1\) in which case \(h^{\mu\nu}\) and \(\bar{h}^{\mu\nu}\) coincide. In this case, it is easy to see that \(Z_1=Z_3=0\) and \(Z_2=-\left((1/3)\Theta+\Sigma\right)\). The \(u_{\mu}u_{\nu}\) and mixed terms are identically zero. In addition, the following condition on the hypersurface is required for \eqref{3.8} to reduce to \eqref{pa1}:

\begin{eqnarray}\label{3.19}
\begin{split}
0&=4\mathcal{E}_{(\mu}u_{\nu)}+\left(Q_{(\mu}+2\Pi_{\mu}\right)e_{\nu)}+\left(\frac{2}{3}\Theta-\Sigma\right)\Sigma_{\mu\nu}+2\Omega\varepsilon_{\delta(\mu}\Sigma_{\nu)}^{\ \delta}\\
&-\Theta\left[\Sigma_{\mu\nu}+2\left(\Sigma_{(\mu}-\varepsilon_{\delta(\mu}\Omega^{\delta}\right)e_{\nu)}\right].
\end{split}
\end{eqnarray}

The scalars \(\lambda\) and \(\beta\) in \eqref{pa1} can now be explicitly written as

\begin{subequations}
\begin{align}
\lambda&=\frac{2}{3}\left(\rho+\Lambda\right)+\mathcal{E}+\frac{1}{2}\Pi-\left(\frac{1}{3}\Theta+\Sigma\right)\left(\frac{2}{3}\Theta-\Sigma\right),\label{3.20}\\
\beta&=\frac{2}{3}\rho-\frac{1}{2}\left(\mathcal{E}+\frac{1}{2}\Pi\right)-\frac{1}{2}\left(\frac{2}{3}\Theta-\Sigma\right)\left(\frac{2}{3}\Theta+\frac{1}{2}\Sigma\right)+2\Omega^2.\label{3.21}
\end{align}
\end{subequations}
Perhaps a well known class of spacetimes having spacelike hypersurfaces with Ricci tensor of such is the LRS II class (see \cite{gbc1}). These are the observers' rest spaces which play a fundamental role in obtaining exact solutions to the Einsten's field equations.

\section{Characterization, the equations and constraints}\label{soc5}

In this section we provide a characterization of locally symmetric hypersurfaces in spacetimes admitting a \(1+1+2\) decomposition, with Ricci tensor of the form \eqref{pa1}. We then obtain additional equations and constraints that aid further analysis of, and restrictions on these hypersurfaces.

The condition of local symmetry of Riemannian manifolds is given by the vanishing of the first covariant derivative of the curvature tensor:

\begin{eqnarray}\label{panel1}
D_{\sigma}\left(^3R_{\mu\nu\delta\gamma}\right)=0.
\end{eqnarray}
Properties of locally symmetric Riemannian spaces \cite{al1,kn1,st1} is a well studied subject and complete classification schemes have been provided. The Lorentzian cases have been studied as well \cite{st1,hda1,cdc1}, with classifications provided up to the \(2\)-symmetric and semi-symmetric cases relatively recently by Senovilla \cite{js1}. Riemannian manifolds satisfying \eqref{panel1} have been shown to satisfy the bi implication

\begin{eqnarray}\label{panel2}
D_{\sigma_1}\ldots D_{\sigma_k}\left(^3R_{\mu\nu\delta\gamma}\right)=0\iff D_{\sigma}\left(^3R_{\mu\nu\delta\gamma}\right)=0,
\end{eqnarray}
where \(D_{\sigma_k}\) denotes the \(k^{th}\) covariant derivative. Hence, \(C_{\mu\nu\delta}=0\) in which case the hypersurfaces we are considering here are conformally flat, and therefore both \(\mathcal{E}\) and \(\mathcal{H}\) are zero (of course this is well known). By contracting \eqref{pa4} with \(e^{\delta}\varepsilon^{\mu\nu},e^{\delta}N^{\mu\nu}\) and \(e^{\delta}e^{\mu}e^{\nu}\) gives respectively

\begin{subequations}
\begin{align}
0&=\left(\lambda-\beta\right)\xi,\label{panew1}\\
0&=\left(\lambda-\beta\right)\phi+\frac{1}{2}\left(\hat{\lambda}-2\hat{\beta}\right),\label{panew2}\\
0&=\left(\hat{\lambda}-2\hat{\beta}\right),\label{panew3}
\end{align}
\end{subequations}
which reduces to the set

\begin{subequations}
\begin{align}
0&=\left(\lambda-\beta\right)\xi,\label{pane1}\\
0&=\left(\lambda-\beta\right)\phi,\label{pane2}
\end{align}
\end{subequations}
Hence, we shall consider the following configurations:

\begin{subequations}
\begin{align}
\lambda-\beta&=0;\qquad\mbox{(Einstein manifold)}\qquad\mbox{or}\label{pane3}\\
\phi=\xi&=0\qquad\left(\mbox{assuming}\quad\lambda-\beta\neq 0\right),\label{pane4}
\end{align}
\end{subequations}
where the condition \(\lambda-\beta=0\) is simply

\begin{eqnarray}\label{ddx}
\frac{2}{3}\Lambda+\frac{3}{4}\Pi-\frac{3}{4}\left(\frac{2}{3}\Theta-\Sigma\right)-2\Omega^2=0.
\end{eqnarray}
We therefore have that the set
\begin{eqnarray}\label{der}
\mathcal{D}:=\lbrace{\lambda,\beta,\xi,\phi\rbrace},
\end{eqnarray}
characterizes embedded locally symmetric \(3\)-manifolds in \(4\)-dimensional spacetimes with Ricci tensor of the form \eqref{pa1}. It is also clear that the considered hypersurfaces are flat only in the Einstein case with \(\lambda=\beta=0\). This allows us to state the first useful proposition:

\begin{proposition}\label{prop1}
Let \(\left(M,g_{\mu\nu}\right)\) be a \(4\)-dimensional spacetime and let \(\Xi\) be a locally symmetric embedded \(3\)-manifold with induced metric \(h_{\mu\nu}\), and with Ricci tensor of the form \eqref{pa1}. Then \(\Xi\) is either

\begin{enumerate}

\item an Einstein space; or

\item is non-twisting with vanishing sheet expansion.

\end{enumerate}
\end{proposition}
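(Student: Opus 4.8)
The plan is to squeeze the entire statement out of the local-symmetry hypothesis through the Cotton tensor. First I would note that \eqref{panel1}, namely $D_\sigma({}^3R_{\mu\nu\delta\gamma})=0$, implies by contraction (and metric-compatibility of $D$) that $D_\sigma({}^3R_{\mu\nu})=0$ and $D_\sigma({}^3R)=0$. Since $\Xi$ is three-dimensional its Weyl tensor vanishes identically, so the sole obstruction to conformal flatness is the Cotton tensor; and because the defining line of \eqref{pa4} (equivalently \eqref{pa6}) is built purely from first derivatives of ${}^3R_{\mu\nu}$ and ${}^3R$, it follows at once that $C_{\mu\nu\delta}=0$. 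This is the only point at which the symmetry assumption is used.

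Next I would turn the tensor identity $C_{\mu\nu\delta}=0$ into scalar constraints by contracting the explicit second line of \eqref{pa4} against $e^\delta\varepsilon^{\mu\nu}$, $e^\delta N^{\mu\nu}$ and $e^\delta e^\mu e^\nu$. Exploiting $\varepsilon^{\mu\nu}e_\nu=0$, $N^{\mu\nu}e_\nu=0$ and $e^\mu e_\mu=1$, together with the definitions $\phi\equiv\delta_\mu e^\mu$ and $\xi\equiv\tfrac12\varepsilon^{\mu\nu}\delta_\mu e_\nu$, these three contractions collapse to \eqref{panew1}--\eqref{panew3}. Feeding \eqref{panew3} into \eqref{panew2} cancels the combination $\hat\lambda-2\hat\beta$ and leaves exactly
\begin{align}
0 &= (\lambda-\beta)\,\xi, \nonumber\\
0 &= (\lambda-\beta)\,\phi. \nonumber
\end{align}

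The dichotomy then drops out of a case split on the factor $\lambda-\beta$. If $\lambda-\beta=0$, then by \eqref{pak1} the trace-free Ricci tensor ${}^3S_{\mu\nu}=(\lambda-\beta)\left(e_\mu e_\nu-\tfrac13 h_{\mu\nu}\right)$ vanishes; since $e_\mu e_\nu-\tfrac13 h_{\mu\nu}$ is not the zero tensor, this is precisely the Einstein condition, giving alternative (1). If $\lambda-\beta\neq0$, the two products above force $\xi=0$ and $\phi=0$, i.e. $\Xi$ is non-twisting with vanishing sheet expansion, giving alternative (2); and these two cases are plainly exhaustive. The only genuinely delicate step is the contraction producing \eqref{panew1}--\eqref{panew3}: one must track carefully which projected-derivative terms survive so that the $e_\mu D_{[\delta}e_{\nu]}$ pieces reproduce $\phi$ and $\xi$ and no spurious shear or acceleration terms remain. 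Given the explicit form \eqref{pa4}, however, this is bookkeeping rather than a conceptual hurdle, so the real substance of the proposition is already encoded in the vanishing of the Cotton tensor.
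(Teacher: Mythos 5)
Your proposal is correct and follows essentially the same route as the paper: local symmetry forces the Cotton tensor \eqref{pa4} to vanish, the three contractions with \(e^{\delta}\varepsilon^{\mu\nu}\), \(e^{\delta}N^{\mu\nu}\) and \(e^{\delta}e^{\mu}e^{\nu}\) yield exactly \eqref{panew1}--\eqref{panew3}, which reduce to \eqref{pane1}--\eqref{pane2}, and the dichotomy follows from the case split on \(\lambda-\beta\). The only cosmetic difference is that you spell out the reduction of \eqref{panew2} via \eqref{panew3} and the identification of \(\lambda-\beta=0\) with the Einstein condition through \eqref{pak1}, steps the paper leaves implicit.
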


Both cases of Proposition \ref{prop1} are very important cases that will be given the necessary considerations. For example, Ricci solitons, which are steady solutions of the Ricci flow evolution equation can be seen as `\textit{perturbations}' of Einstein spaces. On the other hand, the class of \textit{locally rotationally symmetric spacetimes}, which contains a lot of physically relevant spherically symmetric spacetimes in general relativity, have important subclasses that are non-twisting. In context of marginally trapped tubes however (these generalize the boundaries of black holes), the case \(2.\) of Proposition \ref{prop1} are minimal.

Now, using the property of the vanishing of the divergence of \eqref{pa6}, we contract \eqref{pa6} by \(e^{\eta}h_{\eta}^{\mu}D_{\nu}\) and obtain 

\begin{eqnarray}\label{pane7}
2\xi\hat{\beta}+J\left(\lambda-\beta\right)=0,
\end{eqnarray}
where we have defined the operator \(J=a_{\mu}\varepsilon^{\mu\nu}\delta_{\nu}\).

The contracted  Ricci identities, \(D^{\mu}R_{\mu\nu}=0\), obtained from \eqref{panel1} can be expressed as

\begin{eqnarray}\label{panel4}
\left(\lambda-\beta\right)\left(a_{\nu}+\phi e_{\nu}\right)+e_{\nu}\hat{\lambda}+\delta_{\nu}\beta=0,
\end{eqnarray}
which upon contracting with \(e^{\nu}\) and \(u^{\nu}\) we obtain respectively

\begin{eqnarray}\label{panel5}
\left(\lambda-\beta\right)\phi+\hat{\lambda}&=0.
\end{eqnarray}
In any case the first term on the left hand side will vanish by Proposition \ref{prop1}, and hence we must have that \(\hat{\lambda}=0\) which would imply that \(\hat{\beta}=0\) by \eqref{panew3}. In this case we have that \eqref{pane7} reduces to 

\begin{eqnarray}\label{fray}
J\left(\lambda-\beta\right)=0.
\end{eqnarray}

Since \(u^{\mu}R_{\mu\nu\delta\gamma}=0\), the tensor \(D_{\mu}D_{\nu}u_{\delta}\) is symmetric in the \(\mu\) and \(\nu\) indices by the Ricci identities for \(u^{\mu}\). We note that

\begin{eqnarray}
\begin{split}
D_{\mu}u_{\nu}&=\left(\frac{1}{3}\Theta+\Sigma\right)e_{\mu}e_{\nu}+\frac{1}{2}\left(\frac{2}{3}\Theta-\Sigma\right)N_{\mu\nu}+\left(\Sigma_{\nu}+\varepsilon_{\nu\sigma}\Omega^{\sigma}\right)e_{\mu}+\left(\Sigma_{\mu}-\varepsilon_{\mu\sigma}\Omega^{\sigma}\right)e_{\nu}+\Omega\varepsilon_{\mu\nu}+\Sigma_{\mu\nu}.
\end{split}
\end{eqnarray}
Hence, contracting \(D_{\mu}D_{\nu}u_{\delta}\) with \(\varepsilon^{\mu\nu}\) gives zero. Explicitly, we write this as the equation

\begin{eqnarray}\label{panel10}
\begin{split}
0&=\left[3\Sigma\xi+\varepsilon^{\nu\sigma}D_{\nu}\left(\Sigma_{\sigma}-\varepsilon_{\sigma\delta}\Omega^{\delta}\right)\right]e_{\mu}+2\xi\left(\Sigma_{\mu}+\varepsilon_{\mu\sigma}\Omega^{\sigma}\right)+\frac{1}{2}\varepsilon^{\sigma}_{\mu}\delta_{\sigma}\left(\frac{2}{3}\Theta-\Sigma\right)\\
&+\varepsilon^{\sigma\nu}\left(\Omega D_{\sigma}\varepsilon_{\nu\mu}+D_{\sigma}\Sigma_{\nu\mu}\right)+\delta_{\mu}\Omega+\left(\Sigma_{\nu}-\varepsilon_{\nu\sigma}\Omega^{\sigma}\right)\left(\frac{1}{2}\phi\varepsilon^{\nu}_{\mu}-\xi N^{\nu}_{\mu}+\varepsilon^{\sigma\nu}\zeta_{\sigma\mu}\right),
\end{split}
\end{eqnarray}
and upon contracting \eqref{panel10} with \(e^{\mu}\) and simplifying gives the following expression:

\begin{eqnarray}\label{peen1}
3\Sigma\xi+\Omega \phi-\delta_{\mu}\Omega^{\mu}+\varepsilon^{\mu\nu}\left[\delta_{\mu}\Sigma_{\nu}+\Sigma^{\sigma}_{\nu}\left(\delta_{\mu}e_{\sigma}+\delta_{\{\mu}e_{\sigma\}}\right)\right]&=0.
\end{eqnarray}

Now, the Gauss and Codazzi embedding equations to be satisfied by a properly embedded hypersurface, with Ricci tensor of the form \eqref{pa1}, in the ambient spacetime are explicitly given by

\begin{subequations}
\begin{align}
0&=\left[\left(\lambda-\beta\right)+\frac{3}{2}\left(\frac{1}{3}\Theta\Sigma-\mathcal{E}-\frac{1}{2}\Pi\right)\right]\left(e_{\mu}e_{\nu}-\frac{1}{3}h_{\mu\nu}\right)+\frac{1}{3}\Theta\Sigma_{\mu\nu}+\mathcal{E}_{\mu\nu}\nonumber\\
&+2\left(\frac{1}{3}\Theta\Sigma_{(\mu}+\mathcal{E}_{(\mu}+\frac{1}{2}\Pi_{(\mu}\right)e_{\nu)}+\frac{1}{3}h_{\mu\nu}\left(\frac{3}{2}\Sigma^2+2\Sigma_{\mu}\Sigma^{\mu}+\Sigma_{\mu\nu}\Sigma^{\mu\nu}\right),\label{new1}\\
0&=\lambda+2\beta+\frac{2}{3}\Theta^2-\left(\frac{3}{2}\Sigma^2+2\Sigma_{\mu}\Sigma^{\mu}+\Sigma_{\mu\nu}\Sigma^{\mu\nu}\right)-2\rho,\label{new2}\\
0&=-\frac{3}{2}e_{(\mu}\varepsilon_{\ \nu)}^{\sigma}\delta_{\sigma}\Sigma-\frac{3}{2}\left(\xi\Sigma-\mathcal{H}\right) \left(h_{\mu\nu}-e_{\mu}e_{\nu}\right)+\xi\Sigma_{(\mu}e_{\nu)}+\varepsilon_{\sigma\delta(\mu}D^{\sigma}\Sigma_{\nu)}^{\delta}\nonumber\\
&-\varepsilon_{\sigma(\mu}D^{\sigma}\Sigma_{\nu)}+\Sigma_{\sigma}\left(\varepsilon^{\ \sigma}_{(\mu}a_{\nu)}+\frac{1}{2}\phi e_{(\mu}\varepsilon_{\nu)}^{\ \sigma}\right)+\frac{1}{2}\left(\Sigma_{\delta}\varepsilon^{\ \delta}_{\sigma(\mu}-3\varepsilon_{\sigma(\mu}\right)\zeta^{\ \sigma}_{\nu)}\nonumber\\
&+\left(\varepsilon_{\sigma\delta(\mu}D^{\sigma}\Sigma^{\delta}+\mathcal{H}_{(\mu}\right)e_{\nu)}+e_{(\nu}\Sigma_{\mu)},\label{new3}\\
0&=\left[\frac{3}{2}\Sigma\phi+\frac{3}{4}\hat{\Sigma}-\frac{2}{3}\hat{\Theta}-Q+\left(\delta_{\mu}-2a_{\mu}\right)\Sigma^{\mu}\right]e_{\mu}+\frac{3}{2}\left(\Sigma a_{\mu}+\phi\Sigma_{\mu}\right)\nonumber\\
&+\hat{\Sigma}_{\mu}-Q_{\mu}+\left(\xi\varepsilon_{\sigma\mu}+\zeta_{\sigma\mu}\right)\Sigma^{\sigma}-\frac{3}{4}\delta_{\mu}\Sigma-\frac{2}{3}\delta_{\mu}\Theta.\label{new4}
\end{align}
\end{subequations}
Let us take the trace of \eqref{new1} and \eqref{new3} as well as contract with \(e^{\mu}e^{\nu}\). We obtain the set 

\begin{subequations}
\begin{align}
0&=\frac{3}{2}\Sigma^2+2\Sigma_{\mu}\Sigma^{\mu}+\Sigma_{\mu\nu}\Sigma^{\mu\nu},\label{exe1}\\
0&=3\left(\mathcal{H}-\xi\Sigma\right)-\varepsilon^{\mu\nu}\left(\delta_{\mu}-a_{\mu}\right)\Sigma_{\nu},\label{exe2}\\
0&=\frac{2}{3}\left[\left(\lambda-\beta\right)+\frac{1}{2}\left(\frac{1}{3}\Theta\Sigma-\frac{1}{2}\Pi\right)\right]+\frac{1}{3}\left[\frac{3}{2}\Sigma^2+2\Sigma_{\mu}\Sigma^{\mu}+\Sigma_{\mu\nu}\Sigma^{\mu\nu}\right],\label{exe3}\\
0&=\Sigma_{\mu\sigma}\varepsilon^{\mu\nu}\delta_{\nu}e^{\sigma},\label{exe4}
\end{align}
\end{subequations}
(keep in mind that \(\mathcal{H}\) is zero everywhere since the hypersurface is conformally flat) and upon contracting \eqref{new4} with \(e^{\mu}\) we get

\begin{eqnarray}\label{exe5}
0=\frac{3}{2}\Sigma\phi+\frac{3}{4}\hat{\Sigma}-\frac{2}{3}\hat{\Theta}-Q+\left(\delta_{\mu}-2a_{\mu}\right)\Sigma^{\mu}.
\end{eqnarray}
Comparing \eqref{exe1} to \eqref{new2}, we have that

\begin{eqnarray}\label{exe6}
\frac{2}{3}\Theta^2=2\rho-\left(\lambda+2\beta\right),
\end{eqnarray}
and hence, 

\begin{eqnarray}\label{exe7}
\lambda+2\beta=R\leq2\rho.
\end{eqnarray}
Indeed, for finite \(\rho\), the scalar curvature is finite for non-negative \(R\). This would then indicate compactness of the hypersurface since the scalar curvature is assumed to be bounded below. Furthermore, this implies that the energy density \(\rho\) is non-negative and \(\rho=0\) which implies \(^3R=0\). 

We can use \eqref{3.20} and \eqref{3.21} to substitute into \eqref{exe6} and show that the cosmological constant is proportional to the square of the shear scalar \(\Sigma\), and must be negative in the cases considered in this work. In particular we have that \(\Lambda=-(9/4)\Sigma^2\). Therefore, the cosmological constant would vanish if and only if the hypersurface is not shearing. Clearly, requiring \(R\) to be non-negative forces the energy density \(\rho\) to be non-negative on the hypersurfaces as well.

Given a spacetime (of type considered in this work), choosing a hypersurface in the spacetime (this implies specifying the Ricci tensor on the hypersurface, which in turn implies specifying \(\lambda\) and \(\beta\)) will present additionals constraint on the hypersurfaces. So, for example, consider the class II of locall rotationally symmetric spacetimes, with \cite{gbc1}

\begin{eqnarray}
\begin{split}
\lambda&=-\left(\hat{\phi}+\frac{1}{2}\phi^2\right),\label{exe8}\\
\beta&=-\left[\frac{1}{2}\left(\hat{\phi}+\phi^2\right)-K\right],\label{exe9}
\end{split}
\end{eqnarray}  
with \(K\) be the Gaussian curvature of \(2\)-surfaces in the spacetimes and given by

\begin{eqnarray}\label{exe10}
\begin{split}
K&=\frac{1}{3}\rho-\frac{1}{2}\Pi+\frac{1}{4}\phi^2-\frac{1}{4}\left(\frac{2}{3}\Theta-\Sigma\right)^2\\
&=\frac{1}{3}\rho-\frac{1}{2}\Pi+\frac{1}{4}\phi^2-\frac{1}{4}\left(\frac{2}{3}\Theta-\Sigma\right)^2,
\end{split}
\end{eqnarray}  
where we have set \(\mathcal{E}=0\). Substituting the equations of \eqref{exe8} into \eqref{exe6} and \eqref{exe3}, and comparing the results we obtain the equation

\begin{eqnarray}\label{exe11}
\frac{1}{3}\Theta\left(\Theta+\frac{1}{2}\Sigma\right)=\rho+\frac{3}{2}\hat{\phi}+\frac{1}{4}\phi^2+\frac{1}{4}\Pi,
\end{eqnarray}  
which is the additional constraint we seek.

We also stress that additional constraints may be obtained by taking the dot derivatives  of the scalar equations obtained from the Gauss-Codazzi embedding equations.

As a limiting case, in shear-free spacetimes (\(\Lambda\) is necessarily zero), the constraints are greatly simplfied. In particular, the rotation satisfies \(\delta_{\mu}\Omega^{\mu}-\phi\Omega=0\), \(\lambda=(1/4)\Pi\) and the heat flux satisfies \(Q=-\left(2/3\right)\hat{\Theta}\). Hence, for the shear-free case we can make the following observation:

\begin{remark}
A hypersurface with Ricci tensor \eqref{pa1}, in a shear-free spacetime admitting a \(1+1+2\) decomposition is radiating if the expansion decreases along \(e^{\mu}\), is absorbing radiation if the expansion increases along \(e^{\mu}\), and neither radiates nor absorbs radiation if the expansion is constant along \(e^{\mu}\). In the increasing and decreasing cases we make the assumption that the geometry in the vicinity of the hypersurface is smooth.  
\end{remark}
The above remark can be seen in the study of horizon dynamics of black holes particularly in astrophysical and cosmological settings where, for example, there is in-falling radiation across the horizon which increases the horizon area, or a radiating black hole decreases the horizon area. 

Now, the local symmetry condition also implies that \(^3R_{\mu\nu}\) is Codazzi (by the second contracted Bianchi identity) and hence

\begin{eqnarray}\label{panel123}
D_{\sigma}\left(^3R_{\mu\nu}\right)-D_{\mu}\left(^3R_{\sigma\nu}\right)=0.
\end{eqnarray}
Writing \eqref{panel123} explicitly, and contracting with \(e^{\sigma}N^{\mu\nu},e^{\sigma}u^{\mu}u^{\nu},u^{\sigma}e^{\mu}e^{\nu}\), we obtain the following the system of equations

\begin{subequations}
\begin{align}
2\hat{\beta}-\left(\lambda-\beta\right)\phi&=0,\label{pene1}\\
\dot{\beta}&=0,\label{pene2}\\
\dot{\lambda}-2\dot{\beta}&=0,\label{pene4}
\end{align}
\end{subequations}
with \eqref{pene1} being identically satisfied as was shown previously. We see we also have that \(\dot{\lambda}=\dot{\beta}=0\) from \eqref{pene2} and \eqref{pene4}.

The function \(\lambda\), using the Ricci identities for \(e^{\mu}\), can be written as

\begin{eqnarray}\label{panel125}
-e^{\mu}e^{\gamma}N^{\nu\delta}R_{\mu\nu\delta\gamma}=\lambda=-2e^{\mu}N^{\nu\delta}D_{[\mu}D_{\nu]}e_{\delta},
\end{eqnarray}
where the term on the right hand side can be written entirely in terms of the covariant variables. 

The local symmetry condition further implies the following:

\begin{eqnarray}\label{mich1}
\begin{split}
D_{\mu}\left(^3R\right)&=0,\nonumber\\
\mbox{which implies}\ \ \dot{\lambda}+2\dot{\beta}=0,\ \ \hat{\lambda}+2\hat{\beta}&=0\ \ \mbox{and}\ \ \delta^2\left(\lambda+2\beta\right)=0.
\end{split}
\end{eqnarray}
so that \(\dot{R}=\hat{R}=0\), and \(\delta^2\left(\lambda+2\beta\right)=0\), where the first two conditions are satisfied. The covariant derivative of \(\lambda\) (using the left hand side of \eqref{panel125}) simplifies as

\begin{eqnarray}\label{panel127}
\begin{split}
D_{\sigma}\lambda&=-e^{\mu}e^{\gamma}N^{\nu\delta}D_{\sigma}\left(^3R_{\mu\nu\delta\gamma}\right)\\
&=0,
\end{split}
\end{eqnarray}
by \eqref{panel1}. We consequently have that \(\delta^2\lambda=0\Longrightarrow\delta^2\beta=0\). Indeed \eqref{pane7} is always satisfied. Hence, the metric \(h_{\mu\nu}\) is of constant scalar curvature.

We shall now make brief statements on the two cases of Proposition \ref{prop1} individually.

\subsection{The Einstein case}

Suppose we have that \(\lambda-\beta=0\) (with \(\xi\neq0,\phi\neq0\)). By using the vanishing of the dot derivatives of \(\lambda\) and \(\beta\), one can take derivatives of the scalar equations obtained from the Gauss-Codazzi embedding equations and show that the following equation has to be satisfied on the hypersurface:

\begin{eqnarray}\label{ei1}
\Theta^2\dot{\Sigma}=0,
\end{eqnarray}
so that either the expansion vanishes or the shear is constant along \(u^{\mu}\). If the expansion \(\Theta\) vanishes, then the function \(\lambda\) has to be proportional to the energy density \(\rho\) from \eqref{exe6}, in particular \(\lambda=(1/2)\rho\). We also have that the anisotropic stress \(\Pi\) vanishes, and the heat flux satisfies

\begin{eqnarray}\label{ei2}
Q=\frac{3}{2}\left(\Sigma\phi+\frac{1}{2}\hat{\Sigma}\right)+\left(\delta_{\mu}-2a_{\mu}\right)\Sigma^{\mu}.
\end{eqnarray}
Clearly in the shear-free case the hypersurface then models a conformally flat perfect fluid. Note in all this we will also consider that \eqref{exe2} will be satisfied. The condition \(\dot{\Sigma}=0\), which we willnot discuss here, can be used to obtain additional constraints on the hypersurface from the field equations.

\subsection{Case of vanishing twist and sheet expansion}

On the other hand, let us assume that the hypersurfaces are not Einstein and that \(\xi=\phi=0\). The following three constraints are required to be satisfied:

\begin{subequations}
\begin{align}
0&=\varepsilon^{\mu\nu}\left(\delta_{\mu}-a_{\mu}\right)\Sigma_{\nu},\label{ei3}\\
\delta_{\mu}\Omega^{\mu}&=\varepsilon^{\mu\nu}\left[\delta_{\mu}\Sigma_{\nu}+\Sigma_{\nu}^{\sigma}\left(\delta_{\mu}e_{\sigma}+\delta_{\{\mu}e_{\sigma\}}\right)\right],\label{ei4}\\
Q&=\frac{3}{4}\hat{\Sigma}-\frac{2}{3}\hat{\Theta}+\left(\delta_{\mu}-2a_{\mu}\right)\Sigma^{\mu}.\label{ei5}
\end{align}
\end{subequations}

\section{The hypersurfaces admitting a Ricci soliton structure}\label{soc6}

We now proceed to consider the cases for which the hypersurfaces considered in this work admit a Ricci soliton structure. We recall the definition of a Ricci soliton.

\begin{definition}\label{deef1}
A Riemannian manifold \(\left(\Xi,h_{\mu\nu},\varrho,X^{\mu}\right)\) is called a Ricci soliton if there exists a vector field \(X^{\mu}\) and a real scalar \(\varrho\in \mathbb{R}\) such that 

\begin{eqnarray}\label{pin22}
^3R_{\mu\nu}=\left(\varrho-\frac{1}{2}\mathcal{L}_{X}\right)h_{\mu\nu}.
\end{eqnarray}
where \(\mathcal{L}_{X}\) is the Lie derivative operator along the vector field \(X^{\mu}\), and \(\varrho\) is some constant from the set of real numbers.
\end{definition}
A Ricci soliton \(\left(\Xi,h_{\mu\nu},\varrho,X^{\mu}\right)\) is said to be shrinking, steady or expanding if \(\varrho>0,\varrho=0\) or \(\varrho<0\) respectively. The vector field \(X^{\mu}\) is sometimes referred to as the \textit{soliton field}, and we will sometimes use this terminology for the rest of the work.

With respect to the hypersurfaces considered in this work, it is clear that any Ricci soliton would be trivial since they are conformally flat, following from a well known result due to Ivey \cite{iv1}, Perelman \cite{gp1}, Petersen and Wylie \cite{pet1}, and Catino and Mantegazza \cite{cat1} etc., which can be formulated as\\
\ \\
\textit{Any nontrivial homogeneous Riemannian Ricci soliton must be non-compact, non-conformally flat, expanding and non-gradient,} \\
\ \\
Here by non-trivial, it is meant that the Ricci soliton is neither an Einstein space, nor is the product of an Einstein and a (pseudo)-Euclidean space. An immediate result that follows is the following
\begin{corollary}\label{theo2}
Let \(M\) be a \(4\)-dimensional spacetime, and let \(\Xi\) be a locally symmetric embedded \(3\)-manifold in \(M\), with Ricci tensor of the form \eqref{pa1}, and scalar curvature \(R\geq 0\). If \(\Xi\) admits a Ricci soliton structure, then, either \(\Xi\) is locally an Einstein \(3\)-space, or \(\Xi\) is locally isomorphic to

\begin{eqnarray}\label{eei1}
\mathcal{M}_E\times\mathbb{R},
\end{eqnarray}
where \(\mathcal{M}_E\) denotes a \(2\)-dimensional Einstein manifold.
\end{corollary}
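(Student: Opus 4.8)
The plan is to deduce triviality of the soliton directly from properties of \(\Xi\) already established---namely conformal flatness and, via the hypothesis \(R\geq 0\), compactness---and then to read off the only admissible non-Einstein local model from the eigenvalue structure of \eqref{pa1}. I would first collect the structural facts in hand: by the local symmetry condition \eqref{panel1} the Cotton tensor \eqref{pa4} vanishes, so in dimension three \(\Xi\) is conformally flat and \(\mathcal{E}=\mathcal{H}=0\); a locally symmetric Riemannian manifold is moreover locally homogeneous; from \eqref{panel127} together with the vanishing of \(\dot\lambda,\dot\beta,\hat\lambda,\hat\beta\) the scalars \(\lambda,\beta\) are constant on \(\Xi\); and the assumption \(R\geq 0\), fed into \eqref{exe7} for finite energy density, indicates compactness of \(\Xi\).

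Next I would invoke the quoted rigidity statement. Since \(\Xi\) is a locally homogeneous Riemannian Ricci soliton that is simultaneously conformally flat and compact, the contrapositive of the result of Ivey, Perelman, Petersen--Wylie and Catino--Mantegazza applies: a homogeneous soliton which is conformally flat (equivalently here, compact) cannot be nontrivial. Hence the soliton structure on \(\Xi\) is trivial, so that either \(\Xi\) is an Einstein space---the alternative \eqref{pane3}, \(\lambda=\beta\)---or, locally, \(\Xi\) is a Riemannian product of an Einstein manifold with a Euclidean factor.

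It then remains to identify the non-Einstein product in dimension three by matching with \eqref{pa1}. The Ricci operator determined by \eqref{pa1} has eigenvalue \(\lambda\) along \(e^{\mu}\) and the doubly repeated eigenvalue \(\beta\) on the sheet \(N_{\mu\nu}\), and by Proposition \ref{prop1} the non-Einstein alternative \eqref{pane4} holds, so \(\xi=\phi=0\). A trivial, non-Einstein Riemannian product in three dimensions must have Ricci spectrum \((\kappa,\kappa,0)\) with \(\kappa\neq 0\), since a one-dimensional factor carries no curvature and the spectrum \((\lambda,0,0)\) is therefore excluded. Matching the repeated eigenvalue with the sheet forces \(\beta=\kappa\neq 0\) and \(\lambda=0\), so the flat \(\mathbb{R}\) factor is generated by \(e^{\mu}\) while the sheet becomes a two-dimensional Einstein manifold \(\mathcal{M}_E\); this gives the local isomorphism \(\Xi\cong\mathcal{M}_E\times\mathbb{R}\) recorded in \eqref{eei1}.

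The step I expect to be the main obstacle is the passage between the global hypotheses of the cited rigidity theorem---genuine homogeneity and compactness---and the merely local data available here, which is precisely why the statement is phrased up to local isomorphism; one must argue that local homogeneity (from local symmetry) suffices to apply the result at the local level. A secondary point requiring care is the exclusion of the spectrum \((\lambda,0,0)\), which guarantees that the non-Einstein case is exactly \(\mathcal{M}_E\times\mathbb{R}\) rather than a degenerate configuration with \(\beta=0\).
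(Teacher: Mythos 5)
Your proposal is correct and follows essentially the same route as the paper: local symmetry gives a vanishing Cotton tensor, hence conformal flatness, and the quoted Ivey--Perelman--Petersen--Wylie--Catino--Mantegazza rigidity statement then forces the soliton to be trivial, i.e.\ either Einstein or (locally) the product of an Einstein space with a Euclidean factor, which in dimension three is exactly \(\mathcal{M}_E\times\mathbb{R}\). Your extra details---the Ricci-spectrum matching \(\{\lambda,\beta,\beta\}=\{\kappa,\kappa,0\}\) that places the flat factor along \(e^{\mu}\), and the compactness remark (redundant here, since conformal flatness alone triggers the contrapositive of the rigidity result)---are elaborations of steps the paper treats as immediate.
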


Of course then any Ricci soliton - as it pertains to this work - of geometry \eqref{eei1}, must have vanishing sheet expansion and twist. 

That a Ricci soliton is trivial by no means imply uninteresting. In fact, these objects have a very rich structure and have been extensively studied. We will explore some of their properties with regards to our covariant approach.

Now, the use of the local symmetry condition on \eqref{pin22} gives

\begin{eqnarray}\label{pin35}
D_{\sigma}\mathcal{L}_{X}h_{\mu\nu}&=D_{\sigma}D_{(\mu}X_{\nu)}=0.
\end{eqnarray}

Upon comparing \eqref{pin22} with \eqref{pa1}, we see that the equations to to be solved by the hypersurfaces under consideration are

\begin{subequations}
\begin{align}
D_{(\mu}X_{\nu)}&=-\left[\left(\beta-\varrho\right)h_{\mu\nu}+\left(\lambda-\beta\right)e_{\mu}e_{\nu}\right],\label{pin36}\\
D_{\sigma}D_{(\mu}X_{\nu)}&=0.\label{pin37}
\end{align}
\end{subequations}
It is seen that \eqref{pin36} implies \eqref{pin37}, so we have just the equations \eqref{pin36} to solve. By choosing the general form of the vector field on the hypersurface as

\begin{eqnarray}\label{pin38}
X^{\mu}=\alpha e^{\mu}+m^{\mu},
\end{eqnarray}
confined to the hypersurface, where \(m^{\mu}\) is the component of \(X^{\mu}\) lying in the \(2\)-sheet and \(\alpha\in C^{\infty}\left(\Xi\right)\), we can expand \eqref{pin36} and contract with \(h^{\mu\nu},e^{\mu}e^{\nu}\) and \(u^{\mu}e^{\nu}\) to get the following set of equations

\begin{subequations}
\begin{align}
\hat{\alpha}+\delta_{\mu}m^{\mu}&=3\varrho+\lambda+2\beta,\label{pin39}\\
\hat{\alpha}+e_{\mu}\widehat{m^{\mu}}&=\varrho-\lambda,\label{pin40}\\
\dot{\alpha}&=0,\label{pin41}
\end{align}
\end{subequations}

We shall focus on the case where the vector field \(X^{\mu}\) is parallel to \(e^{\mu}\), i.e. \(m^{\mu}=0\). Subtracting \eqref{pin40} from \eqref{pin39} we obtain

\begin{eqnarray}\label{pin44}
\varrho=-\lambda-\beta.
\end{eqnarray}

In the case of vanishing scalar curvature, we have that \(\varrho=\beta\), and hence, the nature of the soliton is entirely specified by \(\beta\). Now, suppose \(^3R>0\). Then from the estimate \cite{ham2}

\begin{eqnarray}\label{pin45}
\frac{1}{3}\left(^3R\right)^2\leq |^3R_{\mu\nu}|^2\leq \left(^3R\right)^2,
\end{eqnarray}
it is straightforward to show that \(\beta\geq0\). To see this, the above estimate reduces to the two inequalities

\begin{eqnarray*}
\begin{split}
-\frac{2}{3}\left(\lambda-\beta\right)^2\leq0,\\
-2\beta\left(\beta+2\lambda\right)\leq0,
\end{split}
\end{eqnarray*}
the first of which is of course satisfied. The second can be rewritten as

\begin{eqnarray}\label{xxa1}
-4\beta\left(^3R-\frac{3}{2}\beta\right)\leq0.
\end{eqnarray}
Therefore, if \(\beta<0\), then one should have 

\begin{eqnarray*}
^3R\leq\frac{3}{2}\beta\Longrightarrow\  ^3R<0,
\end{eqnarray*}
contradicting \(^3R\) being strictly positive. Hence, we must have that \(\beta\geq0\). 

If \(\beta=0\), then the soliton is necessarily expanding since \(^3R=\lambda>0\). Consider the case \(\beta>0\). From \eqref{pin44}, whether the Ricci soliton is steady, shrinking or expanding will depend on the sign of the sum \(\lambda+\beta\): the Ricci soliton is steady, shrinking or expanding if \(\lambda+\beta\) is \(``=0", ``<0"\) or \(``>0"\) respectively. Notice that 

\begin{eqnarray}\label{yoyo}
\begin{split}
^3R=\lambda+2\beta&\geq0\\
\Longrightarrow\ \ \left(\lambda+\beta\right)+\beta&\geq0\\
\Longrightarrow\ \  -\varrho+\beta&\geq0\\
\Longrightarrow\ \ \varrho&\leq\beta,
\end{split}
\end{eqnarray}
with equality holding if and only if \(^3R=0\). Indeed if \(^3R>0\), then we have that \(\varrho<\beta\) (\(\varrho=\beta\Longrightarrow\ ^3R=0\)). From \eqref{exe7}, this gives the following bound on the scalar \(\varrho\): \(\beta-2\rho\leq\varrho\leq\beta\). Therefore whenever \(\beta\geq 2\rho\), \(\varrho\geq0\) and the soliton is non-expanding, this ensures that the energy density \(\rho\) is non-negative, which is desirable from a physical point of view. 

It follows that, for \(^3R=0\), the soliton is

\begin{itemize}
\item[1.] Steady implies \(\Xi\) is flat (\(\lambda=\beta=0\));

\item[2.] Shrinking implies \(\lambda<0\);

\item[3.] Expanding implies \(\lambda>0\).
\end{itemize}

Explicitly, write \eqref{pin44} as

\begin{eqnarray}\label{yop1}
\varrho=\left(\frac{2}{3}\Theta-\Sigma\right)\left(\frac{2}{3}\Theta+\frac{5}{4}\Sigma\right)-2\Omega^2-\frac{1}{4}\Pi+\frac{2}{3}\left(\rho+\Lambda\right).
\end{eqnarray}
The condition for the hypersurface to be of Einstein type can be expressed as

\begin{eqnarray}\label{yop2}
\frac{3}{4}\Sigma\left(\frac{2}{3}\Theta-\Sigma\right)-2\Omega^2+\frac{1}{4}\Pi=0,
\end{eqnarray}
So, for example, whenever \(X^{\mu}\) is a Killing vector for the metric \(h_{\mu\nu}\) of \(\Xi\) and \(\Xi\) is of Einstein type, one has that the following holds on \(\Xi\):

\begin{eqnarray}\label{yop3}
\left(\frac{2}{3}\Theta-\Sigma\right)\left(\frac{2}{3}\Theta+\frac{1}{2}\Sigma\right)-\frac{1}{2}\Pi+\frac{2}{3}\left(\rho+\Lambda\right)=0.
\end{eqnarray}

Indeed, it follows that, if a hypersurface \(\Xi\) admits a Ricci solition structure and on \(\Xi\) we have that 

\begin{eqnarray}\label{yop4}
\frac{3}{4}\Sigma\left(\frac{2}{3}\Theta-\Sigma\right)-2\Omega^2+\frac{1}{4}\Pi\neq0,
\end{eqnarray}
condition \(2.\) of Proposition \ref{prop1} holds, and \(\Xi\) has geometry \(\mathcal{M}_{E}\times\mathbb{R}\). For example, consider a shear-free spacetime with vanishing anisotropic stress. The condition \eqref{yop4} reduces to the requirement that the hypersurface must rotate. 

Also, noting that \(^3R\leq2\rho\), one requires that the following inequality must be satisfied on \(\Xi\):

\begin{eqnarray}\label{fgh}
\Omega^2\leq \frac{1}{3}\Theta^2.
\end{eqnarray}
Consider the case that the spacetime is expansion-free. Then the hypersurface cannot possibly rotate. The converse is of course possible: if the hypersurface is non-rotating, then it is possible to have \(0<\Theta^2\) for \(\Theta\) being non-zero.

Now let us return to the system \eqref{pin39} to \eqref{pin41}. For the solution of \(\alpha\), we can directly integrate \eqref{pin39} to obtain (we will take the constant of integration to be zero)

\begin{eqnarray}\label{pin49}
\begin{split}
\alpha&=\left(\varrho-\lambda\right)\chi\\
&=-\left[^3R+\left(\lambda-\beta\right)\right]\chi\\
&=-2\left(^3R-\frac{3}{2}\beta\right)\chi,
\end{split}
\end{eqnarray}

where \(\chi\) parametrizes integral curves of \(e^{\mu}\). In many instances when dealing with spacetimes of physical interest, the parameter \(\chi\) is can be identified with the radial coordinate. We will consider the interval \(0<\chi<\infty\). Indeed, for the Einstein case, the vector field \(X^{\mu}\) will point opposite the unit direction \(e^{\mu}\) (\(\alpha\neq0\Longrightarrow\ ^3R>0\Longrightarrow\) \(\Xi\) is non-flat). Of course then this does not accommodate the steady case.

In the non-Einstein case, for vanishing scalar curvature one has \(\alpha=3\beta\chi\), and since \(\alpha\neq0\), the soliton field points in the unit direction \(e^{\mu}\) if and only if the soliton is shrinking, and points opposite \(e^{\mu}\) if and only if the soliton is expanding. (Clearly \(^3R=0\) does not accommodate the steady case here as well,  since otherwise we would have that \(\alpha=0\).) Furthermore, it is not difficult to see that the Einstein case will necessarily have \(\varrho<0\).

Consider the case of positive scalar curvature. Recall that \(\beta\geq0\) in this case. If \(\beta=0\) we have an expanding soliton with soliton field pointing opposite \(e^{\mu}\). If \(\beta>0\), then, using \eqref{xxa1} we see that we have an expanding soliton with the soliton field points in the direction of \(e^{\mu}\). This leads us to state the following result:

\begin{proposition}\label{porp1}
Let \(M\) be a spacetime admitting a \(1+1+2\) decomposition, and let \(\Xi\) be a locally symmetric embedded 3-manifold in \(M\) with Ricci tensor of the form \eqref{pa1} with \(^3R\geq0\), admitting a Ricci soliton structure. Suppose the soliton field is non-trivial and parallel to the unit direction \(e^{\mu}\). If \(\Xi\) is Einstein, then \(\Xi\) is expanding opposite the direction of \(e^{\mu}\). If \(\Xi\) is non-Einstein, and the scalar curvature vanishes, then,

\begin{itemize}

\item \(\Xi\) is shinking in the direction of \(e^{\mu}\) for \(\beta>0\); or

\item \(\Xi\) is expanding opposite the direction of \(e^{\mu}\) for \(\beta<0\).

\end{itemize}
Otherwise, if the scalar curvature is strictly positive, then

\begin{itemize}

\item \(\Xi\) is expanding opposite the direction of \(e^{\mu}\) for \(\beta=0\); or

\item \(\Xi\) is expanding in the direction of \(e^{\mu}\) for \(\beta>0\).

\end{itemize}

\end{proposition}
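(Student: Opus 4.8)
The plan is to treat the statement as a sign--bookkeeping exercise built on the two scalars already computed for a soliton field parallel to $e^{\mu}$. Setting $m^{\mu}=0$ in \eqref{pin38}, the field is $X^{\mu}=\alpha e^{\mu}$, so its orientation relative to $e^{\mu}$ is governed purely by the sign of $\alpha$: it points along $e^{\mu}$ when $\alpha>0$ and opposite $e^{\mu}$ when $\alpha<0$, while non--triviality of the field is exactly the requirement $\alpha\neq 0$. The nature of the soliton, on the other hand, is governed by the sign of $\varrho$ via Definition \ref{deef1}. Hence I would first record the two master relations
\begin{eqnarray*}
\varrho=-\lambda-\beta=\beta-{}^3R,\qquad \alpha=-2\left({}^3R-\tfrac{3}{2}\beta\right)\chi,\quad 0<\chi<\infty,
\end{eqnarray*}
the former from \eqref{pin44} together with \eqref{pa3}, the latter from \eqref{pin49}. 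With these in hand the proposition reduces to determining the signs of $\varrho$ and $\alpha$ in each regime.

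Next I would run the case analysis dictated by the dichotomy of Proposition \ref{prop1} and by the value of the scalar curvature. For the Einstein case $\lambda=\beta$, so $\alpha=-{}^3R\chi$ and $\varrho=-2\beta$; non--triviality forces ${}^3R>0$, whence $\beta>0$, giving $\alpha<0$ and $\varrho<0$, i.e. expansion opposite $e^{\mu}$. For the non--Einstein case with ${}^3R=0$ one has $\lambda=-2\beta$, so $\varrho=\beta$ and $\alpha=3\beta\chi$; the signs of $\varrho$ and $\alpha$ then coincide with the sign of $\beta$, yielding a shrinking soliton along $e^{\mu}$ for $\beta>0$ and expansion opposite $e^{\mu}$ for $\beta<0$, as claimed.

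The delicate regime, and the step I expect to be the main obstacle, is the non--Einstein case with ${}^3R>0$. Here I would first invoke the curvature estimate \eqref{pin45}, which as shown reduces to \eqref{xxa1} and forces $\beta\geq 0$. When $\beta=0$ one gets $\varrho=-{}^3R<0$ and $\alpha=-2\,{}^3R\,\chi<0$, i.e. expansion opposite $e^{\mu}$, which is immediate. When $\beta>0$ the sign of $\alpha$ hinges on the comparison of ${}^3R$ with $\tfrac{3}{2}\beta$, and this is precisely the content of \eqref{xxa1}; the care needed is to combine that inequality with non--triviality ($\alpha\neq 0$, which rules out the borderline ${}^3R=\tfrac{3}{2}\beta$) in order to extract the strict sign of $\alpha$, while $\varrho=\beta-{}^3R$ simultaneously delivers the expanding conclusion. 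Throughout I would verify that each listed hypothesis (e.g. $\beta<0$ under ${}^3R=0$, or $\beta>0$ under ${}^3R>0$) is genuinely compatible with ${}^3R\geq 0$ and with non--triviality, so that no case is vacuous; the bound ${}^3R\leq 2\rho$ from \eqref{exe7} serves here both as a consistency check and to keep the energy density non--negative.
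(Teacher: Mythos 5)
Your overall route is exactly the paper's: the paper's own proof is the discussion immediately preceding the proposition, which derives \(\varrho=-\lambda-\beta\) from \eqref{pin44}, \(\alpha=-2\left({}^3R-\tfrac{3}{2}\beta\right)\chi\) from \eqref{pin49}, the constraint \(\beta\geq 0\) for \({}^3R>0\) from \eqref{pin45} via \eqref{xxa1}, and then reads off the signs of \(\varrho\) and \(\alpha\) case by case. Your bookkeeping for the Einstein case, for both \({}^3R=0\) cases, and for the \({}^3R>0\), \(\beta=0\) case is correct and coincides with the paper's.

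The gap is the one case you yourself flag as ``the main obstacle'', and it is not merely delicate --- your plan cannot close it in the direction the statement asserts. For \(\beta>0\), \eqref{xxa1} reads \(-4\beta\left({}^3R-\tfrac{3}{2}\beta\right)\leq 0\); dividing by \(-4\beta<0\) flips the inequality and gives \({}^3R\geq\tfrac{3}{2}\beta\), so \eqref{pin49} yields \(\alpha=-2\left({}^3R-\tfrac{3}{2}\beta\right)\chi\leq 0\), and non-triviality (\(\alpha\neq 0\)) sharpens this to \(\alpha<0\). Hence the soliton field points \emph{opposite} \(e^{\mu}\), not along it: executing your own sign analysis contradicts the final bullet of the proposition (``expanding in the direction of \(e^{\mu}\) for \(\beta>0\)''). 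Only the ``expanding'' half survives, since \(\varrho=\beta-{}^3R\leq-\tfrac{1}{2}\beta<0\). What your plan actually exposes is an inconsistency in the paper itself: the claimed direction along \(e^{\mu}\) is what one obtains by dividing \eqref{xxa1} by \(-4\beta\) \emph{without} flipping the inequality. So you must either carry the computation to its end and record that the correct conclusion in this case is ``expanding opposite the direction of \(e^{\mu}\)'' (i.e.\ the statement's last bullet cannot be derived from \eqref{pin49} and \eqref{xxa1} as written), or find a genuinely different argument --- but no argument consistent with those two equations can give \(\alpha>0\) there. Your hedge that ``care is needed to extract the strict sign'' conceals the fact that the extracted sign is the wrong one for the statement as posed.
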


One may consider more general cases of \eqref{pin38} where \(X^{\mu}\) also has a component along the \(u^{\mu}\) direction. Consider the vector field

\begin{eqnarray}\label{pin50}
X^{\mu}=\gamma u^{\mu}+\alpha e^{\mu} + m^{\mu}.
\end{eqnarray} 
Using \eqref{pin36} and contracting with \(h^{\mu\nu},e^{\mu}e^{\nu},u^{\mu}e^{\nu}\) and \(u^{\mu}u^{\nu}\) we obtain the following set of equations

\begin{subequations}
\begin{align}
\dot{\gamma}+\Theta\gamma+\hat{\alpha}+\delta_{\mu}m^{\mu}&=3\varrho+\lambda+2\beta,\label{pin51}\\
\hat{\alpha}+\left(\frac{1}{3}\Theta+\Sigma\right)\gamma+e_{\mu}\widehat{m^{\mu}}&=\varrho-\lambda,\label{pin52}\\
\dot{\alpha}-\hat{\gamma}&=0,\label{pin53}\\
\dot{\gamma}&=0.\label{pin54}
\end{align}
\end{subequations}
As before, if we consider the case of vanishing sheet component of \(X^{\mu}\), then comparing \eqref{pin54}, \eqref{pin52}, and \eqref{pin51} , \(\gamma\) can explicitly be written as

\begin{eqnarray}\label{pin55}
\left(\frac{2}{3}\Theta-\Sigma\right)\gamma=2\left(\varrho+\lambda+\beta\right). 
\end{eqnarray}
Interestingly, what this implies is that, if \(N^{\mu\nu}D_{\mu}u_{\nu}=(2/3)\Theta-\Sigma\) vanishes, then, whether the Ricci soliton is an expander, shrinker or steady does not depend on the choice of the component along \(u^{\mu}\). The analysis then follows as in the case of the former, with the solution for the component having an additional term in terms of \(\alpha\).  Consequently, for hypersurfaces of the form \(\chi=X\left(\tau\right)\) (with \(\tau\) parametrizing intregral curves of \(u^{\mu}\)), if the component along \(u^{\mu}\) is non-vanishing and the Ricci soliton is foliated by \(2\)-surfaces, this leads to an existence result with implications for black holes in spacetimes. Specifically, the Ricci soliton necessarily admits a marginally trapped tube structure which generalizes the notion of black hole boundaries. These are hypersurfaces foliated by 2-surfaces on which the trace of the second fundamental form with respect to the tangent to outgoing null geodesics (called the outgoing expansion null expansion and we denote this by \(\theta^+\)) vanishes \cite{ash1,ash2,ash3,ak1,ib1,boo2,ibb1,ib3,shef1,rit1,shef2}. In case of spacetimes admitting the splitting considered here, this is given by \cite{shef1,shef2}

\begin{eqnarray}
\theta^+=\frac{1}{\sqrt{2}}\left(\frac{2}{3}\Theta-\Sigma+\phi\right).
\end{eqnarray}

It is also easily seen from \eqref{pin55} that, if \(\gamma\neq0\), then 

\begin{eqnarray}\label{pin56}
\frac{2}{3}\dot{\Theta}-\dot{\Sigma}=0,
\end{eqnarray}
The function \(\alpha\) may now be solved for a given \(\gamma\). Notice in the case that \(\alpha=0\) (or constant), \(\gamma\) must be constant, and analysis follows just as the previous cases.

Let us now consider a special case where the vector field \(X^{\mu}\) is a generator of symmetries on \(\Xi\) and the ambient spacetime.

\subsection{\(X^{\mu}\) is a conformal Killing vector for the induced metric on the hypersurface}

Suppose \(X^{\mu}\) is a conformal Killing vector (CKV) for the metric on the hypersurface \(h_{\mu\nu}\). Then, there exists some smooth function \(\Psi\) on the hypersurface such that

\begin{eqnarray}\label{pin57}
\mathcal{L}_Xh_{\mu\nu}=2\Psi h_{\mu\nu}.
\end{eqnarray}
Taking the derivative of \eqref{pin57} and noting that \(D_{\sigma}\mathcal{L}_Xh_{\mu\nu}=0\), we have

\begin{eqnarray}\label{pin58}
h_{\mu\nu}D_{\sigma}\Psi=0,
\end{eqnarray}
and hence, \(\Psi\) must be constant in which case \(X^{\mu}\) is a homothetic Killing vector (HKV). The associated conformal factor \(\Psi\) can be found by setting

\begin{eqnarray}\label{pin59}
\Psi h_{\mu\nu}=-\left[\left(\beta-\varrho\right)h_{\mu\nu}+\left(\lambda-\beta\right)e_{\mu}e_{\nu}\right].
\end{eqnarray}
Taking the trace of \eqref{pin59} as well as contracting with \(e^{\mu}e^{\nu}\) gives respectively 

\begin{subequations}
\begin{align}
\Psi&=-\frac{1}{3}\left(^3R-3\varrho\right),\label{pin60}\\
\Psi&=-\left(\lambda-\varrho\right),\label{pin6001}
\end{align}
\end{subequations}
which upon equating gives

\begin{eqnarray}\label{pin61}
\frac{2}{3}\left(\lambda-\beta\right)=0.
\end{eqnarray}
Therefore the hypersurface must be of Einstein type. As expected, the vanishing of the derivative of \eqref{pin60} gives a constant scalar curvature.

Notice that from \eqref{pin44} one see that \(\varrho=-\left(\lambda+\beta\right)=-2\beta\), which gives the conformal factor as \(\Psi=-3\beta=-\ ^3R\). Also, \(\varrho=-2\beta\Longrightarrow\varrho\leq0\). This thus allows us to state the following

\begin{proposition}
Let \(M\) be a spacetime admitting a \(1+1+2\) decomposition, and let \(\Xi\) be a locally symmetric embedded 3-manifold in \(M\) with Ricci tensor of the form \eqref{pa1}, which admits a Ricci soliton structure. If the associated soliton field \(X^{\mu}=\gamma u^{\mu}+\alpha e^{\mu}\) is a conformal Killing vector for the induced metric on \(\Xi\), then \(X^{\mu}\) is a homothetic Killing vector with associated conformal factor given by

\begin{eqnarray}\label{pinew1}
\Psi=-3\left[2\rho-\frac{2}{3}\Theta^2+\Sigma^2+2\Omega^2\right],
\end{eqnarray}
and \(\Xi\) is a non-shrinking Ricci soliton of Einstein type. Furthermore, \(\Xi\) is steady if and only if \(\Xi\) is flat.

\end{proposition}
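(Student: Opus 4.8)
The plan is to assemble the statement from the chain of scalar identities already established for a soliton hypersurface whose soliton field is a conformal Killing vector, and then to rewrite the resulting quantities covariantly. First I would dispose of the homothety claim, which is essentially immediate: imposing the CKV condition \eqref{pin57} and differentiating, the local-symmetry consequence $D_{\sigma}\mathcal{L}_{X}h_{\mu\nu}=0$ recorded in \eqref{pin35} forces $h_{\mu\nu}D_{\sigma}\Psi=0$. Since $h_{\mu\nu}$ is non-degenerate this gives $D_{\sigma}\Psi=0$, so $\Psi$ is constant and $X^{\mu}$ is a homothetic Killing vector.

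Next I would pin down the Einstein-type condition by comparing the CKV ansatz \eqref{pin59} with the soliton equation. The trace gives $\Psi=-\tfrac{1}{3}\left({}^3R-3\varrho\right)$ as in \eqref{pin60}, while the $e^{\mu}e^{\nu}$ contraction gives $\Psi=-\left(\lambda-\varrho\right)$ as in \eqref{pin6001}; subtracting the two yields \eqref{pin61}, that is $\lambda=\beta$, so $\Xi$ is of Einstein type. Feeding $\lambda=\beta$ into \eqref{pin44} gives $\varrho=-2\beta$, and then \eqref{pin6001} collapses to $\Psi=-3\beta=-\,{}^3R$.

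For the explicit conformal factor \eqref{pinew1} I would compute ${}^3R=\lambda+2\beta$ directly from the covariant expressions \eqref{3.20}--\eqref{3.21}; in this particular combination the electric Weyl scalar $\mathcal{E}$ (which vanishes in any case by conformal flatness of the locally symmetric hypersurface) and the anisotropic-stress scalar $\Pi$ drop out identically, after which I would eliminate $\Pi$ and $\Lambda$ using the Einstein constraint \eqref{ddx} together with the $\Lambda$--$\Sigma$ relation extracted from the Gauss constraint \eqref{exe6}, reducing $-\,{}^3R$ to an expression in $\rho,\Theta,\Sigma,\Omega$ alone, namely \eqref{pinew1}. The two sign statements then follow cheaply: the pinching estimate \eqref{pin45} already gives $\beta\geq0$ whenever ${}^3R\geq0$, so $\varrho=-2\beta\leq0$ and the soliton is non-shrinking; and since $\lambda=\beta$ forces ${}^3R=3\beta$, steadiness $\varrho=0$ is equivalent to $\beta=0$, equivalently ${}^3R=0$, i.e.\ to $\Xi$ being flat.

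The main obstacle I anticipate is the reduction in the third paragraph: turning $-3\beta$ into the stated closed form \eqref{pinew1} requires carefully combining the covariant formulas for $\lambda$ and $\beta$ with both the Einstein constraint and the $\Lambda$--$\Sigma$ relation. The cancellation of the $\mathcal{E}$ and $\Pi$ terms simplifies matters, but the bookkeeping of the $\Theta\Sigma$ cross terms and the $\Sigma^{2}$ contributions is exactly where a coefficient slip is most likely, so I would treat that algebra with particular care and cross-check the final expression against the independent evaluation $\Psi=-\,{}^3R$ obtained from \eqref{exe6}.
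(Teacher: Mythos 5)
Your proposal is correct and follows essentially the same route as the paper: the paper's own derivation (the text preceding the proposition) obtains constancy of $\Psi$ from $D_{\sigma}\mathcal{L}_{X}h_{\mu\nu}=0$, gets the Einstein condition by equating the trace and $e^{\mu}e^{\nu}$ contractions \eqref{pin60}--\eqref{pin6001}, and then deduces $\varrho=-2\beta$ and $\Psi=-3\beta=-\,{}^3R$ from \eqref{pin44}, exactly as you do. The only (harmless) difference is that you invoke the pinching estimate \eqref{pin45} to get $\beta\geq0$, whereas in the Einstein case this is immediate from ${}^3R=3\beta\geq0$.
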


The above result also agrees with the well known fact that if the soliton field is a Killing field for the metric on \(\Xi\), then, \(h_{\mu\nu}\) is an Einstein metric.

\subsection{\(X^{\mu}\) is a conformal Killing vector for both metrics on the hypersurface and the ambient spacetime}  

Let \(X^{\mu}\) be a CKV for both the metric on the hypersurface and that on the ambient spacetime. Denote by \(\Psi\) and \(\bar{\Psi}\) the associated conformal factors respectively. The systems to be simultaneously solved simultaneously are \eqref{pin57} and

\begin{eqnarray}\label{pin62}
\mathcal{L}_Xg_{\mu\nu}=2\bar{\Psi} g_{\mu\nu}.
\end{eqnarray}
We can expand \eqref{pin62} as

\begin{eqnarray}\label{pin63}
\left(u_{(\mu}\nabla_{\nu)}+\nabla_{(\mu}u_{\nu)}\right)\gamma+\left(e_{(\mu}\nabla_{\nu)}+\nabla_{(\mu}e_{\nu)}\right)\alpha=\bar{\Psi}g_{\mu\nu},
\end{eqnarray}
(we are assuming here again the the vector \(X^{\mu}\) has no component lying in the \(2\)-sheet) from which we obtain the following set of equations:

\begin{subequations}
\begin{align}
\left(\frac{2}{3}\Theta-\Sigma\right)\gamma+\phi\alpha&=2\bar{\Psi},\label{pin64}\\
\dot{\gamma}+\mathcal{A}\alpha&=\bar{\Psi},\label{pin65}\\
\hat{\alpha}+\left(\frac{1}{3}\Theta+\Sigma\right)\gamma&=\bar{\Psi},\label{pin66}\\
\dot{\alpha}-\hat{\gamma}+\left(\frac{1}{3}\Theta+\Sigma\right)\left(\gamma-\alpha\right)&=0.\label{pin67}
\end{align}
\end{subequations}

We state and prove the following

\begin{proposition}\label{prop4}
Let \(M\) be a spacetime admitting a \(1+1+2\) decomposition, and let \(\Xi\) be a locally symmetric embedded 3-manifold in \(M\) with Ricci tensor of the form \eqref{pa1}, which admits a Ricci soliton structure. If the associated soliton field \(X^{\mu}=\gamma u^{\mu}+\alpha e^{\mu}\) is a conformal Killing vector for both \(h_{\mu\nu}\) and \(g_{\mu\nu}\), then either

\begin{enumerate}
\item \(\frac{1}{3}\Theta+\Sigma=0\); or

\item \(X^{\mu}\) is null, in which case, if \(X^{\mu}\) is a Killing vector for the metric \(g_{\mu\nu}\), then \(\Xi\) is flat, \(X^{\mu}\) is a Killing vector for the metric \(h_{\mu\nu}\), and the acceleration \(\mathcal{A}\) must vanish on \(\Xi\). And if \(\Xi\) is foliated by \(2\)-surfaces, then the Ricci soliton has the structure of a marginally trapped tube. 
\end{enumerate}
\end{proposition}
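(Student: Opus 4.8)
The plan is to treat this as a compatibility problem between two linear systems already assembled for the component functions \(\gamma\) and \(\alpha\): the soliton equations \eqref{pin51}--\eqref{pin54} (which hold because \(X^\mu\) generates a Ricci soliton) and the conformal Killing equations \eqref{pin64}--\eqref{pin67} for the ambient metric. Since \(X^\mu\) is in particular a conformal Killing vector for \(h_{\mu\nu}\), the preceding proposition applies verbatim: \(X^\mu\) is homothetic, \(\Xi\) is of Einstein type (so \(\lambda=\beta\)), and \(\varrho=-2\beta\) with \(\Psi=-3\beta=-\,^3R\). I would record these facts at the outset, since the entire argument reduces to comparing matching entries of the two systems and then invoking them.

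First I would extract the stated dichotomy. The soliton constraints \eqref{pin53} and \eqref{pin54} give \(\dot\alpha=\hat\gamma\) and \(\dot\gamma=0\); substituting \(\dot\alpha-\hat\gamma=0\) into the conformal Killing equation \eqref{pin67} collapses it to \(\left(\tfrac13\Theta+\Sigma\right)(\gamma-\alpha)=0\). Hence either \(\tfrac13\Theta+\Sigma=0\), which is alternative~(1), or \(\gamma=\alpha\). In the latter case \(X_\mu X^\mu=\alpha^2-\gamma^2=0\), so \(X^\mu\) is null, which is alternative~(2). This step is essentially free once one notices that \eqref{pin53} annihilates the derivative part of \eqref{pin67}.

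For the refinements inside alternative~(2) I would impose that \(X^\mu\) is Killing for \(g_{\mu\nu}\), i.e. \(\bar\Psi=0\), and exploit that \eqref{pin52} and \eqref{pin66} have identical left-hand sides; equating their right-hand sides yields \(\bar\Psi=\varrho-\lambda\). With \(\bar\Psi=0\) this forces \(\varrho=\lambda\), and combining with \(\varrho=-2\beta\) and \(\lambda=\beta\) gives \(\beta=0\). Then \(^3R=\lambda+2\beta=0\), so \(\Xi\) is flat, and \(\Psi=-3\beta=0\), so \(X^\mu\) is in fact Killing for \(h_{\mu\nu}\). For the acceleration, nontriviality of the soliton field forces \(\alpha=\gamma\neq0\); feeding \(\dot\gamma=0\) and \(\bar\Psi=0\) into \eqref{pin65} gives \(\mathcal{A}\alpha=0\), whence \(\mathcal{A}=0\). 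Finally, if \(\Xi\) is foliated by \(2\)-surfaces, equation \eqref{pin64} with \(\bar\Psi=0\) and \(\gamma=\alpha\neq0\) reduces to \(\tfrac23\Theta-\Sigma+\phi=0\), which is exactly the vanishing of \(\theta^+=\tfrac1{\sqrt2}\left(\tfrac23\Theta-\Sigma+\phi\right)\); hence the soliton carries a marginally trapped tube structure.

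The computations here are short, so the only real care needed is bookkeeping: correctly importing the Einstein-type conclusion together with the values \(\varrho=-2\beta\) and \(\Psi=-3\beta\) from the CKV-for-\(h\) proposition, and tracking the nontriviality hypothesis \(\gamma\neq0\) so that the divisions producing \(\mathcal{A}=0\) and \(\theta^+=0\) are legitimate. The main conceptual step—and the only non-mechanical one—is recognizing that the soliton system and the ambient conformal Killing system share the pairs \eqref{pin53}/\eqref{pin67} and \eqref{pin52}/\eqref{pin66}, so that both the dichotomy and the flatness conclusion drop out from term-by-term comparison rather than from solving either system in full.
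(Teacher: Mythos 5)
Your proposal is correct and follows essentially the same route as the paper's proof: the dichotomy from comparing \eqref{pin53} with \eqref{pin67}, the identities \(\mathcal{A}\alpha=\bar{\Psi}\) and \(\varrho-\lambda=\bar{\Psi}\) from the pairs \eqref{pin54}/\eqref{pin65} and \eqref{pin52}/\eqref{pin66}, then flatness, \(\Psi=0\), \(\mathcal{A}=0\), and the marginally trapped condition from \eqref{pin64}, all hinging on \(\alpha=\gamma\neq0\). The only (harmless) differences are that you make the nullity \(X_\mu X^\mu=\alpha^2-\gamma^2=0\) explicit where the paper leaves it implicit, and you reach \(\beta=0\) by importing \(\varrho=-2\beta\) from the preceding proposition rather than combining \(\varrho=\lambda=\beta\) with \eqref{pin44} directly.
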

\begin{proof}
Comparing \eqref{pin53} to \eqref{pin67}, \eqref{pin54} to \eqref{pin65}, and subtracting \eqref{pin66} from \eqref{pin52}, we obtain respectively

\begin{subequations}
\begin{align}
\left(\gamma-\alpha\right)\left(\frac{1}{3}\Theta+\Sigma\right)&=0,\label{pin70}\\
\mathcal{A}\alpha&=\bar{\Psi},\label{pin71}\\
\varrho-\lambda&=\bar{\Psi}.\label{pin72}
\end{align}
\end{subequations}
From \eqref{pin70}, either \((1/3)\Theta+\Sigma=0\) or \(\gamma-\alpha=0\). Let us assume that \((1/3)\Theta+\Sigma\neq0\) and that \(\gamma-\alpha=0\). If \(X^{\mu}\) is a Killing vector for \(g_{\mu\nu}\), \(\bar{\Psi}=0\), and since \(\bar{\Psi}=\Psi\) (by \eqref{pin6001} and \eqref{pin72}), \(\Psi=0\) and \(X^{\mu}\) is a Killing vector for \(h_{\mu\nu}\). From \eqref{pin71} we have

\begin{eqnarray}\label{pin75}
\mathcal{A}\alpha=0.
\end{eqnarray}
Since \(\alpha\neq0\) (\(\alpha=0\Longrightarrow\gamma=0\Longrightarrow X^{\mu}=0\)), we must have that \(\mathcal{A}=0\). To show that \(\Xi\) is flat, first notice that as \(X^{\mu}\) is a KV for \(h_{\mu\nu}\), we know that \(\Xi\) is Einstein. From \eqref{pin72}, we have that \(\varrho=\lambda=\beta\), which upon comparing to \eqref{pin44} gives \(\lambda+2\beta=\  ^3R=0\). Hence, \(\beta=\lambda=0\Longrightarrow\ ^3R_{\mu\nu\delta\gamma}=0\). 

Now, the equation \eqref{pin64} can be written as

\begin{eqnarray}\label{pin73}
\left(\frac{2}{3}\Theta-\Sigma+\phi\right)\gamma=0.
\end{eqnarray}
Again, \(\gamma\neq0\) and hence we must have

\begin{eqnarray}\label{pin77}
\frac{2}{3}\Theta-\Sigma+\phi=0,
\end{eqnarray}
in which case \(2\)-surfaces in \(\Xi\) are marginally trapped. Therefore, as \(\Xi\) is foliated by \(2\)-surfaces, we have that \(\Xi\) has the structure of a marginally trapped tube.
\end{proof}

The following corollary follows from Proposition \ref{prop4}:

\begin{corollary}\label{cor2} 
Let \(M\) be a spacetime admitting a \(1+1+2\) decomposition, and let \(\Xi\) be a locally symmetric embedded 3-manifold in \(M\) with Ricci tensor of the form \eqref{pa1} with scalar curvature \(^3R\geq0\). Assume that \(\Xi\) admits a Ricci soliton structure with the soliton field \(X^{\mu}\) (we assume this vector has no component lying in the sheet) being a CKV for both \(h_{\mu\nu}\) and \(g_{\mu\nu}\). If \(X^{\mu}\) is not null, then for simultaneous non-vanishing of the expansion and rotation on \(\Xi\), the anisotropic stress cannot be zero.
\end{corollary}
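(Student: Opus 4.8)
The plan is to reduce the statement to the single algebraic identity $\Pi=\Theta^2+8\Omega^2$, valid under the stated hypotheses, from which the conclusion is immediate. The two structural inputs I would combine are: first, that a soliton field which is a conformal Killing vector for the induced metric $h_{\mu\nu}$ forces $\Xi$ to be of Einstein type; and second, that a soliton field which is a conformal Killing vector for \emph{both} $h_{\mu\nu}$ and $g_{\mu\nu}$, and is not null, must satisfy $\tfrac{1}{3}\Theta+\Sigma=0$.

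First I would invoke the earlier proposition on conformal Killing vectors of the induced metric: since $X^{\mu}$ is assumed to be a CKV for $h_{\mu\nu}$, the hypersurface $\Xi$ is Einstein, so the Einstein-type condition \eqref{yop2} holds, namely $\tfrac{3}{4}\Sigma\left(\tfrac{2}{3}\Theta-\Sigma\right)-2\Omega^2+\tfrac{1}{4}\Pi=0$. Note that the vanishing of the electric Weyl scalar $\mathcal{E}$ (from conformal flatness) has already been built into \eqref{yop2}, so no further reduction of that equation is needed.

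Next I would use Proposition \ref{prop4}. Since $X^{\mu}=\gamma u^{\mu}+\alpha e^{\mu}$ is assumed not null, we have $X^{\mu}X_{\mu}=\alpha^2-\gamma^2\neq 0$, so in particular $\gamma-\alpha\neq 0$; feeding this into \eqref{pin70} rules out the null branch and leaves case $1$, that is $\tfrac{1}{3}\Theta+\Sigma=0$, equivalently $\Sigma=-\tfrac{1}{3}\Theta$. Substituting this into \eqref{yop2}, and using $\tfrac{2}{3}\Theta-\Sigma=\Theta$, the Einstein condition collapses to $-\tfrac{1}{4}\Theta^2-2\Omega^2+\tfrac{1}{4}\Pi=0$, i.e.
\[
\Pi=\Theta^2+8\Omega^2.
\]

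Finally I would read off the conclusion: the right-hand side is a sum of squares, so $\Pi\geq 0$, with equality if and only if $\Theta=\Omega=0$. In particular, if both the expansion $\Theta$ and the rotation $\Omega$ are non-vanishing on $\Xi$, then $\Pi>0$ and the anisotropic stress cannot vanish, as claimed. The computation itself is short, so I do not expect a genuine obstacle; the only point requiring care is the correct pairing of the two inputs — the Einstein condition coming from the $h_{\mu\nu}$-CKV proposition, and the vanishing combination $\tfrac{1}{3}\Theta+\Sigma$ coming from the non-null branch of Proposition \ref{prop4} — since it is precisely their superposition that turns \eqref{yop2} into a positive-definite expression for $\Pi$.
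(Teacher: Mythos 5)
Your proposal is correct and follows essentially the same route as the paper's proof: non-nullness of \(X^{\mu}=\gamma u^{\mu}+\alpha e^{\mu}\) forces \(\gamma\neq\alpha\), hence \(\tfrac{1}{3}\Theta+\Sigma=0\) by \eqref{pin70}, and substituting this together with the Einstein condition into \eqref{yop2} yields \(\Pi=\Theta^2+8\Omega^2\), which is positive when \(\Theta\) and \(\Omega\) do not vanish. The only difference is that you spell out the intermediate justifications (the \(h_{\mu\nu}\)-CKV proposition giving the Einstein property, and \(X^{\mu}X_{\mu}=\alpha^2-\gamma^2\neq0\) giving \(\gamma-\alpha\neq0\)) that the paper leaves implicit.
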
 

\begin{proof}
Since \(\gamma\neq\alpha\), we must have \((1/3)\Theta+\Sigma=0\). Direct substitution of \((1/3)\Theta+\Sigma=0\) into \eqref{yop2} (noting the \(\Xi\) is Einstein) gives

\begin{eqnarray}\label{cole1}
\Pi=\Theta^2+8\Omega^2.
\end{eqnarray}
The result then follows if \(\Omega\) and \(\Theta\) are not simultaneously zero.
\end{proof}

Under the assumptions of Proposition \ref{prop4}, hypersurfaces on which \((1/3)\Theta+\Sigma=0\) with the soliton field non-null, can admit non-flat Ricci soliton structure. 

The below proposition gives a non-trivial case where the soliton field can be explicitly found.

\begin{proposition}\label{cor3} 
Let \(M\) be a spacetime admitting a \(1+1+2\) decomposition, and let \(\Xi\) be a locally symmetric embedded 3-manifold in \(M\) with Ricci tensor of the form \eqref{pa1} with \(^3R\geq0\). Assume that \(\Xi\) admits a Ricci soliton structure with soliton field \(X^{\mu}=\gamma u^{\mu}+\alpha e^{\mu}\) being a CKV for both \(h_{\mu\nu}\) and \(g_{\mu\nu}\), and suppose \(X^{\mu}\) is non-null with the \(u^{\mu}\) component constant along \(e^{\mu}\). If the acceleration \(\mathcal{A}\) is covariantly constant and non-vanishing, then the components \(\alpha\) and \(\gamma\) have the general solutions  

\begin{subequations}
\begin{align}
\alpha&=h\left(\tau\right)e^{-\mathcal{A}\tau},\label{dcf1}\\
\gamma&=-\mathcal{A}\int h\left(\tau\right)e^{-\mathcal{A}\tau}d\tau,\label{dcf2}
\end{align}
\end{subequations}
for an arbitrary function \(h\left(\tau\right)\), with the solutions is subject to

\begin{eqnarray}\label{dds}
\phi h\left(\tau\right)e^{-A\tau}=\mathcal{A}\Theta\int h\left(\tau\right)e^{-\mathcal{A}\tau}d\tau.
\end{eqnarray}

\end{proposition}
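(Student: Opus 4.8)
The plan is to treat the four conformal Killing equations for the ambient metric, \eqref{pin64} through \eqref{pin67}, as the governing system for the two scalars \(\alpha\) and \(\gamma\), and to integrate the reduced form of this system along the integral curves of \(u^{\mu}\), on which \(\tau\) is the natural parameter.

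First I would reduce the system using the non-null hypothesis. Since \(X^{\mu}\) is non-null while being a conformal Killing vector for both \(h_{\mu\nu}\) and \(g_{\mu\nu}\), the relation \eqref{pin70} obtained in the proof of Proposition \ref{prop4} forbids the factor \(\gamma-\alpha\) from vanishing, so we are forced into \(\tfrac{1}{3}\Theta+\Sigma=0\) (equivalently \(\tfrac{2}{3}\Theta-\Sigma=\Theta\)). Substituting this removes every explicit \(\Sigma\)-term and collapses \eqref{pin64}--\eqref{pin67} to the four relations \(\Theta\gamma+\phi\alpha=2\bar{\Psi}\), \(\dot{\gamma}+\mathcal{A}\alpha=\bar{\Psi}\), \(\hat{\alpha}=\bar{\Psi}\), and \(\dot{\alpha}=\hat{\gamma}\).

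Next I would feed in the two remaining hypotheses, \(\hat{\gamma}=0\) and the covariant constancy of \(\mathcal{A}\). Eliminating the conformal factor \(\bar{\Psi}\) among the surviving equations---together with the identification \(\bar{\Psi}=\Psi\) coming from \eqref{pin6001} and \eqref{pin72}---is intended to leave a single first-order linear transport relation for \(\alpha\) in which \(\mathcal{A}\) enters as a genuine constant coefficient along each \(u^{\mu}\)-curve; integrating it then produces the homogeneous exponential \(e^{-\mathcal{A}\tau}\) multiplied by an arbitrary integration function \(h(\tau)\), which is \eqref{dcf1}. Inserting this into the relation \(\dot{\gamma}=-\mathcal{A}\alpha\) and performing one further quadrature along the same curves yields \eqref{dcf2} for \(\gamma\).

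Finally, the one relation of the reduced set not yet consumed, \(\Theta\gamma+\phi\alpha=2\bar{\Psi}\), is no longer free: substituting the explicit forms \eqref{dcf1}--\eqref{dcf2} should turn it into the pointwise compatibility condition \(\phi\alpha=-\Theta\gamma\), which is exactly \eqref{dds} once \(\gamma\) is written as its quadrature. I expect the main obstacle to sit in the third paragraph: the hatted and dotted \(1+1+2\) derivatives do not commute, so covariant constancy of \(\mathcal{A}\) and the vanishing of \(\hat{\gamma}\) must be invoked with great care to certify that the transport equation for \(\alpha\) truly carries a constant coefficient---and hence that \(h(\tau)\) survives as a genuinely free function rather than being pinned down by \(\dot{\alpha}=\hat{\gamma}=0\)---while simultaneously checking that the leftover equation degenerates precisely to \eqref{dds} and not to an additional independent restriction on the geometry.
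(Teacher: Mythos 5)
Your overall route is the paper's: non-nullity forces \(\gamma\neq\alpha\), hence \(\tfrac{1}{3}\Theta+\Sigma=0\) by \eqref{pin70}; the CKV system collapses to the four relations you list; \(\gamma\) is obtained by the quadrature \(\dot{\gamma}=-\mathcal{A}\alpha\); and \eqref{dds} comes from substituting into \eqref{pin64}. The genuine gap is in your middle step, and it is exactly the obstacle you flag in your last paragraph but do not resolve. You feed in \(\hat{\gamma}=0\) \emph{before} integrating, which via \eqref{pin67} gives \(\dot{\alpha}=0\), while eliminating \(\bar{\Psi}\) (using \eqref{pin54} in \eqref{pin65}, i.e.\ \eqref{pin71}, together with \eqref{pin66}) gives \(\hat{\alpha}=\mathcal{A}\alpha\). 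Integrating this pair "along each \(u^{\mu}\)-curve" cannot produce \eqref{dcf1}: an integration constant of an ODE in \(\tau\) is by construction independent of \(\tau\), so the general solution of the pair is \(\tau\)-independent and exponential in \(\chi\) (schematically \(\alpha\propto e^{\mathcal{A}\chi}\)), with no free function of \(\tau\) anywhere. Moreover the statement "\(e^{-\mathcal{A}\tau}\) times an arbitrary \(h(\tau)\)" is vacuous as a description of solutions of a \(\tau\)-ODE (every function of \(\tau\) has that form), and it is flatly inconsistent with the constraint \(\dot{\alpha}=0\) that your own order of operations has already imposed — that constraint pins \(h(\tau)=Ce^{\mathcal{A}\tau}\) and destroys the claimed arbitrariness.

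What the paper actually does is order the steps differently, and that ordering is the missing idea. It first \emph{combines} the relations into the single two-variable PDE \(\dot{\alpha}+\hat{\alpha}-\mathcal{A}\alpha=0\), Eq.\ \eqref{gfe}, solves that by characteristics to get \(\alpha=h\left(\chi-\tau\right)e^{-\mathcal{A}\tau}\), computes \(\gamma\) by quadrature, and only \emph{then} invokes the hypothesis that \(\gamma\) is independent of \(\chi\) to trade the characteristic profile \(h\left(\chi-\tau\right)\) for \(h\left(\tau\right)\). The free function in \eqref{dcf1} is thus the profile along the characteristics of a genuinely two-variable PDE, not an integration constant along \(u^{\mu}\); a \(\tau\)-integration performed after imposing \(\hat{\gamma}=0\) can never generate it. (The tension you identified is real and survives in the paper: the combined PDE \eqref{gfe} is strictly weaker than the system it came from, and the final solution still sits uneasily with \(\dot{\alpha}=\hat{\gamma}=0\) and with \(\bar{\Psi}=0\) versus \eqref{pin71}; but reproducing the proposition as stated requires the paper's characteristics-first ordering, which your proposal does not contain.)
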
 

\begin{proof}

Since \(\gamma\neq\alpha\), we must have \((1/3)\Theta+\Sigma=0\). Noting \(\bar{\Psi}=0\), from combining \eqref{pin66} anb \eqref{pin67} we obtain the linear fiirst order partial differential equation

\begin{eqnarray}\label{gfe}
\dot{\alpha}+\hat{\alpha}-\mathcal{A}\alpha=0.
\end{eqnarray}
The above equation can be solved to give the general solution

\begin{eqnarray}\label{gfepi1}
\alpha&=h\left(\chi-\tau\right)e^{-\mathcal{A}\tau}
\end{eqnarray}
From \eqref{pin65} we have that 

\begin{eqnarray}\label{gfe2}
\gamma=-\mathcal{A}\int h\left(\chi-\tau\right)e^{-\mathcal{A}\tau}d\tau,.
\end{eqnarray}
However, by assumption, \(\gamma\) is independent of the parameter \(\chi\). Hence, setting \(h\left(\chi-\tau\right)=h\left(\tau\right)\) we obtain \eqref{dcf1} and \eqref{dcf2}. One can obtain \eqref{dds} by substituting \eqref{dcf1} and \eqref{dcf2} into \eqref{pin64} as well as using the fact that \((1/3)\Theta+\Sigma=0\).
\end{proof}
It immediately follows that

\begin{corollary}
Let \(M\) be an expansion-free spacetime admitting a \(1+1+2\) decomposition, and let \(\Xi\) be a locally symmetric embedded 3-manifold in \(M\) with Ricci tensor of the form \eqref{pa1} with \(^3R\geq0\). Assume that \(\Xi\) admits a Ricci soliton structure with soliton field \(X^{\mu}=\gamma u^{\mu}+\alpha e^{\mu}\) being a CKV for both \(h_{\mu\nu}\) and \(g_{\mu\nu}\), and suppose \(X^{\mu}\) is non-null with the \(u^{\mu}\) component constant along \(e^{\mu}\). If the acceleration \(\mathcal{A}\) is covariantly constant and non-vanishing, then the sheet expansion \(\phi\) must vanish.
\end{corollary}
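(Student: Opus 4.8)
The plan is to obtain the conclusion directly from the constraint \eqref{dds} established in Proposition \ref{cor3}, since every hypothesis of that proposition is in force here, the only extra input being that the ambient spacetime is expansion-free. First I would set \(\Theta=0\) throughout. The right-hand side of \eqref{dds} carries the factor \(\mathcal{A}\Theta\), so it vanishes identically, and \eqref{dds} collapses to
\[
\phi\, h(\tau)\, e^{-\mathcal{A}\tau}=0 .
\]
Next I would divide out the exponential: because \(\mathcal{A}\) is a real constant, \(e^{-\mathcal{A}\tau}\) is strictly positive for every value of the parameter \(\tau\) and can never contribute a zero, so the constraint reduces to \(\phi\, h(\tau)=0\). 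Recalling from \eqref{dcf1} that \(\alpha=h(\tau)e^{-\mathcal{A}\tau}\) is precisely the \(e^{\mu}\)-component of the soliton field \(X^{\mu}=\gamma u^{\mu}+\alpha e^{\mu}\), this says \(\phi\,\alpha=0\), and it remains to deduce \(\phi=0\) by verifying that \(\alpha\) (equivalently \(h\)) is not identically zero.

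The hard part will be exactly this last point, namely excluding the degenerate branch \(h\equiv 0\). If \(\alpha\equiv 0\) then, by \eqref{dcf2}, \(\gamma\) reduces to a constant and \(X^{\mu}\) degenerates to the purely timelike field \(\gamma u^{\mu}\); in that situation \eqref{dds} is satisfied vacuously and imposes no condition on \(\phi\). I would rule this out by invoking the genericity built into Proposition \ref{cor3}: the solutions \eqref{dcf1}--\eqref{dcf2} describe the \emph{nontrivial} radial soliton field of the problem for an arbitrary (nonconstant) choice of \(h(\tau)\), so that \(h\not\equiv 0\). Wherever \(h(\tau)\neq 0\) the factor \(h(\tau)\) may be cancelled, forcing \(\phi=0\) there; since \(\phi\) is a covariant scalar on \(\Xi\) and all covariant variables are assumed smooth, this vanishing extends across \(\Xi\), giving \(\phi=0\) identically.

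I expect the bookkeeping to be entirely routine — the computation is a single substitution into \eqref{dds} followed by cancellation of a never-vanishing factor. The only genuine content is the justification that the \(e^{\mu}\)-component \(\alpha\) of the non-null soliton field does not vanish identically, which separates the intended (generic) configuration from the purely timelike degenerate case; everything else is immediate from Proposition \ref{cor3}.
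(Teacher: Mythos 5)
Your computation tracks the paper's proof exactly up to the dichotomy: with \(\Theta=0\) the right-hand side of \eqref{dds} vanishes, the strictly positive factor \(e^{-\mathcal{A}\tau}\) cancels, and one is left with \(\phi\,h(\tau)=0\), so that either \(\phi=0\) or \(h\equiv0\). The gap is in how you dispose of the branch \(h\equiv0\), which you yourself flag as the only real content. Invoking ``the genericity built into Proposition \ref{cor3}'' is not a proof: the corollary must hold for \emph{every} configuration satisfying its stated hypotheses, and none of those hypotheses says that \(h\) is generic or nonconstant. Worse, as you correctly observe, if the branch \(h\equiv0\), \(\gamma=\mathrm{const}\neq0\) were admissible, then \eqref{dds} would be vacuous and the conclusion \(\phi=0\) would simply fail for that configuration; so this branch has to be excluded by the hypotheses, not by fiat.

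The paper excludes it concretely, and you have all the ingredients: by \eqref{dcf1} and \eqref{dcf2} (where, as elsewhere in the paper, the constant of integration is taken to be zero, cf.\ the remark preceding \eqref{pin49}), \(h\equiv0\) forces \(\alpha=0\) \emph{and} \(\gamma=0\), hence \(X^{\mu}=0\). That contradicts the hypotheses: the zero vector satisfies \(g_{\mu\nu}X^{\mu}X^{\nu}=0\) and is therefore (degenerately) null, whereas \(X^{\mu}\) is assumed non-null --- equivalently, there is then no soliton field at all. Your purely timelike branch \(X^{\mu}=\gamma u^{\mu}\) with \(\gamma\) a nonzero constant only arises if one retains a nonzero integration constant in \eqref{dcf2}, which the paper's conventions do not allow; without that convention you would owe a separate argument against it, and genericity does not supply one. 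A minor secondary point: even granting \(h\not\equiv0\), the pointwise identity \(\phi h=0\) yields \(\phi=0\) only where \(h\neq0\), so your smoothness argument needs \(h\) to be nonvanishing on a dense set --- one more reason to argue via the triviality of \(X^{\mu}\) rather than via genericity of \(h\). With that single replacement, your proof coincides with the paper's.
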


\begin{proof}
If the assumptions herein hold, then we have the solutions \eqref{dcf1} and \eqref{dcf2} for the components along \(e^{\mu}\) and \(u^{\mu}\) respectively, subject to \eqref{dds}. But the spacetime is expansion-free, and hence from \eqref{dds} we have

\begin{eqnarray}\label{dds1}
\phi h\left(\tau\right)e^{-A\tau}=0.
\end{eqnarray}
Therefore, we have that either \(\phi=0\) or \(h\left(\tau\right)=0\). We rule out the latter as otherwise this would give \(\alpha=\gamma=0\Longrightarrow X^{\mu}=0\), and hence the result follows.
\end{proof}

If the function \(h\left(\tau\right)\) is strictly positive, then the converse of the above corollary also holds, i.e. under the assumptions of the corollary the vanishing of the sheet expansion implies the spacetime is expansion-free.

\section{Application to general locally rotationally symmetric spacetimes}\label{soc7}

We consider spacetimes with locally rotational symmetry \cite{cc1,gbc1}. 

\begin{definition}
A \textbf{locally rotationally symmetric (LRS)} spacetime is a spacetime in which at each point \(p\in M\), there exists a continuous isotropy group generating a multiply transitive isometry group on \(M\) (see \cite{crb1,gbc1} and associated references). The general metric of LRS spacetimes is given by

\begin{eqnarray}\label{jan29191}
\begin{split}
ds^2&=-A^2d\tau^2 + B^2d\chi^2 + F^2 dy^2 + \left[\left(F\bar{D}\right)^2+ \left(Bh\right)^2 - \left(Ag\right)^2\right]dz^2+ \left(A^2gd\tau - B^2hd\chi\right)dz,
\end{split}
\end{eqnarray}
where \(A^2,B^2,F^2\) are functions of \(\tau\) and \(\chi\), \(\bar{D}^2\) is a function of \(y\) and \(k\). The scalar (\(k\) can take the signs of negative, zero or positive, and fixes the geometry of the \(2\)-surfaces. \(k=-1\) corresponds to a hyperbolic \(2\)-surface, \(k=0\) corresponds to the \(2\)-plane, and \(k=+1\) corresponds to a spherical \(2\)-surface. The quantities \(g,h\) are functions of \(y\). 
\end{definition}
For the case \(g=h=0\) one has the LRS II class, a generalization of spherically symmetric solution to Einstein field equations (EFEs). Some other well known solutions of the LRS class include the G\"{o}del's rotating solution, the Kantowski-Sachs models and various Bianchi models. 

In these spacetimes, all vector and tensor quantities vanish. From \eqref{exe1}, we see that \(\Sigma=0\) (this can also be obtained from the fact that these spacetimes have vanishing cosmological constant, and since the cosmological constant is proportional to the square of the shear, the shear must vanish), and therefore all considerations in this section will be shear-free. (We have also noted the conformal flatness of the hypersurfaces and have set the magnetic and electric Weyl scalars to zero.) The field equations for these spacetimes can be obtained from the Ricci identities for the vector fields \(u^{\mu}\) and \(e^{\mu}\), as well as the contracted Bianchi identities. They are given by

\begin{itemize}

\item \textit{Evolution}

\begin{subequations}
\begin{align}
\frac{2}{3}\dot{\Theta}&=\mathcal{A}\phi- \frac{2}{9}\Theta^2 - 2\Omega^2 - \frac{1}{2}\Pi - \frac{1}{3}\left(\rho+3p\right),\label{subbe1}\\
\dot{\phi}&=\frac{2}{3}\Theta\left(\mathcal{A}-\frac{1}{2}\phi\right) +2\xi\Omega+ Q,\label{subbe2}\\
-\frac{1}{3}\dot{\rho}+\frac{1}{2}\dot{\Pi}&=-\frac{1}{6}\Theta\Pi+\frac{1}{2}\phi Q+\frac{1}{3}\Theta\left(\rho+p\right),\label{subbe3}\\
\dot{\xi}&=-\frac{1}{3}\Theta\xi + \left(\mathcal{A}-\frac{1}{2}\phi\right)\Omega,\label{sube3}\\
\dot{\Omega}&=\mathcal{A}\xi-\frac{2}{3}\Theta\Omega,\label{sube4}
\end{align}
\end{subequations}

\item \textit{Propagation}

\begin{subequations}
\begin{align}
\frac{2}{3}\hat{\Theta}&=2\xi\Omega + Q,\label{subbe4}\\
\hat{\phi}&=-\frac{1}{2}\phi^2 +\frac{2}{9}\Theta^2+2\xi^2-\frac{2}{3}\rho-\frac{1}{2}\Pi,\label{subbe5}\\
-\frac{1}{3}\hat{\rho}+\frac{1}{2}\hat{\Pi}&=-\frac{3}{4}\phi\Pi-\frac{1}{3}\Theta Q\label{subbe6}\\
\hat{\xi}&=-\phi\xi + \frac{1}{3}\Theta\Omega,\label{sube9}\\
\hat{\Omega}&=\left(\mathcal{A}-\phi\right)\Omega,\label{sube10}
\end{align}
\end{subequations}

\item \textit{Evolution/Propagation}

\begin{subequations}
\begin{align}
\hat{\mathcal{A}}-\dot{\Theta}&=-\left(\mathcal{A}+\phi\right)\mathcal{A}-\frac{1}{3}\Theta^2-2\Omega^2+\frac{1}{2}\left(\rho+3p\right),\label{subbe7}\\
\dot{\rho}+\hat{Q}&=-\Theta\left(\rho+p\right)-\left(2\mathcal{A}+\phi\right)Q,\label{subbe8}\\
\dot{Q}+\hat{p}+\hat{\Pi}&=-\left(\mathcal{A}+\frac{3}{2}\phi\right)\Pi-\frac{4}{3}\Theta Q-\left(\rho+p\right)\mathcal{A}.\label{subbe9}
\end{align}
\end{subequations}

\item \textit{Constraints}

\begin{subequations}
\begin{align}
0=\Omega Q,\label{gb1}\\
0=\left(\rho+p-\frac{1}{2}\Pi\right)\Omega+Q\xi,\label{gb2}\\
0=\left(2\mathcal{A}-\phi\right)\Omega.\label{subbe9}
\end{align}
\end{subequations}

\end{itemize}
Furthermore, for an arbitrary scalar \(\psi\) in a locally rotationally symmetric spacetime, the commutation relation for the dot and hat derivatives is given by \cite{cc1}

\begin{eqnarray}\label{ghh1}
\hat{\dot{\psi}}-\hat{\dot{\psi}}=-\mathcal{A}\dot{\psi}+\left(\frac{1}{3}\Theta+\Sigma\right)\hat{\psi}.
\end{eqnarray}

From \eqref{gb1} we have that either \(\Omega=0\) or \(Q=0\). We start by assuming that \(\Omega=0\) and \(Q\neq0\). Then, from \eqref{gb2} we must have \(\xi=0\), and hence the hypersurfaces to be considered have LRS II symmetries, and therefore we will treat them as embedded in the solutions of the LRS II class. We consider this below.

\subsection{LRS II class with non-vanishing heat flux: $\Omega=\xi=0$ and $Q\neq0$}

Firstly, the form of the Ricci tensor on the hypersurfaces can be expressed as 

\begin{eqnarray}\label{mj1}
^3R_{\mu\nu}=-\left(\hat{\phi}+\frac{1}{2}\phi^2\right)e_{\mu}e_{\nu}-\frac{1}{2}\left(\hat{\phi}+\phi^2-2K\right)N_{\mu\nu},
\end{eqnarray}
so we have

\begin{subequations}
\begin{align}
\lambda&=-\left(\hat{\phi}+\frac{1}{2}\phi^2\right),\label{mj2}\\
\beta&=-\frac{1}{2}\left(\hat{\phi}+\phi^2-2K\right),\label{mj3}
\end{align}
\end{subequations}
where 

\begin{eqnarray}\label{mj4}
K=\frac{1}{3}\rho-\frac{1}{2}\Pi+\frac{1}{4}\phi^2-\frac{1}{9}\Theta^2,
\end{eqnarray}
is the Gaussian curvature of the \(2\)-surfaces. The dot and hat derivatives of the Gaussian curvature are respectively given by

\begin{subequations}
\begin{align}
\dot{K}&=-\frac{2}{3}\Theta K,\label{mj5}\\
\hat{K}&=-\phi K.\label{mj6}
\end{align}
\end{subequations}
Using the fact that \(\dot{\lambda}=\dot{\beta}=\hat{\lambda}=\hat{\beta}=0\), we arrive at the following constraint equation

\begin{eqnarray}\label{mj7}
-\left(\frac{2}{3}\Theta+\phi\right)K=\frac{1}{2}\phi\left(\dot{\phi}+\hat{\phi}\right),
\end{eqnarray}
Notice that, whenever \(\phi\) vanishes, the constraint \eqref{mj7} says that either the immersed \(2\)-surfaces in the hypersurface are \(2\)-planes with \(K=0\) or, the hypersurface is expansion-free.

Furthermore, from \eqref{exe11} we have that

\begin{eqnarray}\label{mj8}
\frac{1}{3}\Theta^2=\rho+\frac{3}{2}\hat{\phi}+\frac{1}{4}\phi^2+\frac{1}{4}\Pi,
\end{eqnarray}
which simplifies as

\begin{eqnarray}\label{mj9}
\phi^2+\Pi=0.
\end{eqnarray}
(Notice how this forces the anisotropic stress to be non-positive.)

For LRS II class of spacetimes, \(\mathcal{H}=0\)  and \(\xi=0\) by definition, and hence, \eqref{exe2} is automatically satisfied. We shall now consider both cases of \ref{prop1}.

\subsubsection{Case \(1\): The Einstein case}

Let us assume that the hypersurface is Einstein. Consideration was given to this in Section \ref{soc5}. We show that case 1. implies case 2. of Proposition \ref{prop1}. 

Since the hypersurface is Einstein, using \eqref{mj2} and \eqref{mj3} gives \(\lambda-\beta=0\) as

\begin{eqnarray}\label{mj10}
K=-\frac{1}{2}\hat{\phi}.
\end{eqnarray}
Comparing \eqref{mj4} and \eqref{subbe5}, and using \eqref{mj9} we get \(\phi^2=0\), which gives \(K=0\) and therefore all \(2\)-surfaces as subspaces of the hypersurface are planes. 

\subsubsection{Case \(2\): The case of vanishing sheet expansion}

If we assume that the hypersurface is not of Einstein type, then \(\phi=0\). If the Gaussian curvature \(K\) is zero, then, the hypersurface is Einstein

The set of field equation \eqref{subbe1} to \eqref{subbe9} reduces to (we use here the fact that \(\phi=0\ \ \mbox{implies}\ \ \Pi=0\), as well as \(Q=-(2/3)\hat{\Theta}\))

\begin{subequations}
\begin{align}
\dot{\Theta}&=-\frac{1}{3}\Theta^2-\frac{1}{2}\left(\rho+3p\right),\label{mj11}\\
\hat{\Theta}&=0\iff Q=0,\label{mj12}\\
\dot{\rho}&=-\Theta\left(\rho+p\right),\label{mj13}\\
\hat{\rho}&=0,\label{mj14}\\
\hat{\mathcal{A}}&=-\left(\mathcal{A}^2+\frac{2}{3}\Theta^2\right),\label{mj15}\\
\hat{p}&=-\mathcal{A}\left(\rho+p\right),\label{mj16}
\end{align}
\end{subequations} 
coupled with the constraints

\begin{subequations}
\begin{align}
\mathcal{A}\Theta&=0,\label{mj17}\\
\rho&=\frac{1}{3}\Theta^2.\label{mj18}
\end{align}
\end{subequations} 

If \(K\neq0\), then \(\Theta=0\). But both \(\rho\) and \(p\) are zero, and hence

\begin{eqnarray}
\lambda-\beta=-K=0.
\end{eqnarray}
Therefore the hypersurface is flat. This allows for us to remark the following:

\begin{remark}
Any locally symmetric hypersurface in LRS II class of spacetimes, with Ricci tensor of form \eqref{pa1}, is necessarily flat.
\end{remark}

It indeed follows that any Ricci soliton structure admitted by the hypersurface is steady.
 
We saw that \(\Omega=0\ \ \mbox{implies}\ \ \xi=0\) if we assume \(Q\neq0\). Now let us assume that \(Q=0\) and \(\Omega\). For this we may consider two scenarios: the case where \(\Omega\neq0\) but \(\xi=0\) (these are the LRS I class of spacetimes which are stationary inhomogeneous), or the more general case for which \(\Omega\) and \(\xi\) are non-zero. 

\subsection{LRS I with vanishing heat flux: $\xi=0,\Omega\neq0$ and $Q=0$}

For these spacetimes the dot product of all scalars vanish (it was shown in \cite{ssgos1} that an arbitrary scalar \(\psi\) in general LRS spacetimes satisfy \(\dot{\psi}\Omega=\hat{\psi}\xi\), and hence for \(\xi=0\) and \(\Omega\neq0\), one has \(\dot{\psi}=0\)). 

Now, from \eqref{sube4} we have that

\begin{eqnarray}
0&=\Theta\Omega\label{mj19}.
\end{eqnarray}
From \eqref{mj19}  it is clear that \(\Theta\) must be zero. But an easy check shows that \eqref{fgh} cannot hold in this case, and hence these hypersurfaces cannot exist under the assumption of this work.

\subsection{LRS with vanishing heat flux: $\xi\neq0,\Omega\neq0$ and $Q=0$}

Let us now consider the case where both the rotation and the spatial twist are simultaneously non vanishing, noting the vanishing of the heat flux. If the anisotropic stress is zero then at least one of \(\Omega\) or \(\xi\) should vanish \cite{ve1}, and as such we can begin by assuming that \(\Pi\neq0\). Noting the expression \(\dot{\psi}\Omega=\hat{\psi}\xi\) we have the following set of equations \cite{ssgos1}

\begin{subequations}
\begin{align}
0&=\left(\Omega^2-\Sigma^2\right)+\frac{1}{3}\rho-\frac{1}{9}\Theta^2+\frac{1}{4}\Pi+\frac{1}{2}\mathcal{A}\phi,\label{mj20}\\
0&=-\left(\Omega^2-\Sigma^2\right)+\frac{1}{6}\left(\rho+3p\right)+\frac{1}{9}\Theta^2+\frac{1}{4}\Pi-\frac{1}{2}\mathcal{A}\phi,\label{mj21}\\
0&=\rho+p+\Pi.\label{mj22}
\end{align}
\end{subequations}
However, from \eqref{gb2} we have that 

\begin{eqnarray}\label{mj23}
\rho+p=\frac{1}{2}\Pi,
\end{eqnarray}
which gives \(\Pi=0\) by comparing to \eqref{mj22}. 
Hence, it follows that for these hypersurfaces to exist at least one of \(\Omega\) or \(\xi\) has to vanish. Therefore the only possibility for which the hypersurface can exist under the assumptions of this work is if \(\Omega=\xi=0\), in which case it was shown to be flat. We can summarize the result as follows:

\begin{proposition}
Let \(M\) be a locally rotationally symmetric spacetime. Any locally symmetric hypersurface in \(M\) orthogonal to the fluid velocity is necessarily flat. And if the hypersurface admits a Ricci structure, the soliton is steady with the components of the soliton field being constants. 
\end{proposition}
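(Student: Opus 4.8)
The plan is to read the statement as the endpoint of the case analysis of Section~\ref{soc7}, organised around the single algebraic constraint \eqref{gb1}, namely $\Omega Q = 0$, and to exhaust the resulting dichotomy. First I would note that ``orthogonal to the fluid velocity'' fixes the embedding data to $a_1 = 1$, $a_2 = 0$, so that $h_{\mu\nu} = \bar{h}_{\mu\nu}$ and the induced Ricci tensor is precisely of the form \eqref{pa1}; Proposition~\ref{prop1} then applies and $\Xi$ is either Einstein or has $\phi = \xi = 0$. The aim of the first half is to show that, in an LRS spacetime, every branch collapses to the flat case.

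Next I would split on \eqref{gb1}. In the branch $\Omega = 0$, $Q \neq 0$, equation \eqref{gb2} forces $\xi = 0$, placing $\Xi$ in the LRS~II class; here I invoke the two subcases already treated, namely the Einstein subcase, where comparing \eqref{mj4} with \eqref{subbe5} and using \eqref{mj9} gives $\phi^2 = 0$ hence $K = 0$, and the $\phi = 0$ subcase, where assuming $K \neq 0$ forces $\Theta = 0$ and $\rho = p = 0$, contradicting $K \neq 0$ through \eqref{mj4}. Either way $\lambda = \beta = 0$ and $\Xi$ is flat. In the complementary branch $Q = 0$ I would dispose of the remaining symmetry types: for LRS~I ($\xi = 0$, $\Omega \neq 0$) the vanishing of all dot derivatives together with \eqref{sube4} gives $\Theta\Omega = 0$, forcing $\Theta = 0$, which violates the bound \eqref{fgh}; and for the genuinely rotating and twisting type ($\xi \neq 0$, $\Omega \neq 0$) the system \eqref{mj20}--\eqref{mj22} combined with \eqref{gb2}, i.e. \eqref{mj23}, forces $\Pi = 0$, which by the known fact that $\Pi = 0$ requires $\Omega\xi = 0$ is a contradiction. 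Hence the only admissible configuration is $\Omega = \xi = 0$, already shown flat, which proves the first assertion.

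For the soliton half, flatness gives $^3R_{\mu\nu} = 0$, i.e. $\lambda = \beta = 0$, and \eqref{pin44} then yields $\varrho = -\lambda - \beta = 0$, so the soliton is steady; the defining relation \eqref{pin36} degenerates to the Killing equation $D_{(\mu}X_{\nu)} = 0$ on the flat induced metric. To upgrade this to constancy of the components of $X^\mu = \gamma u^\mu + \alpha e^\mu$, I would use that in the surviving LRS~II configuration the sheet expansion, spatial twist, acceleration of $e^\mu$ and shear of $e^\mu$ all vanish, so that $D_{(\mu}e_{\nu)} = 0$; projecting the Killing equation onto $e^\mu e^\nu$, onto $e^\mu$ and the sheet, and using \eqref{pin41}, then forces $\hat{\alpha} = \delta_\mu\alpha = \dot{\alpha} = 0$, with the analogous relations for $\gamma$, so that $D_\mu\alpha = D_\mu\gamma = 0$. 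I expect the main obstacle to be the bookkeeping of the impossibility branches rather than anything conceptual: specifically, confirming that \eqref{fgh} is genuinely violated in the LRS~I branch and that the rotating-twisting branch is truly empty once \eqref{mj20}--\eqref{mj22}, \eqref{gb2} and the relation $\Pi = 0 \Rightarrow \Omega\xi = 0$ are combined, together with verifying that it is the vanishing of \emph{all} the sheet-connection quantities, and not merely the $e^\mu$- and $u^\mu$-directional derivatives, that promotes ``steady'' to ``constant components.''
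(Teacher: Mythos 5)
Your flatness argument is the paper's own, step for step: the split on \eqref{gb1}; the \(\Omega=0,\ Q\neq0\) branch reduced via \eqref{gb2} to LRS II, whose Einstein subcase collapses by comparing \eqref{mj4} with \eqref{subbe5} and using \eqref{mj9}, and whose \(\phi=0\) subcase collapses because \(K\neq0\) would force \(\Theta=0\), \(\rho=p=0\) and hence \(K=0\) from \eqref{mj4}; the LRS I branch eliminated by \(\Theta\Omega=0\) against the bound \eqref{fgh}; and the \(\xi\neq0,\ \Omega\neq0\) branch eliminated by playing \eqref{mj20}--\eqref{mj22} against \eqref{mj23}. Deriving steadiness from \(\varrho=-\lambda-\beta\) together with \(\lambda=\beta=0\) is likewise exactly the paper's (implicit) route, so up to that point the proposal is correct and identical in approach to the paper.

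The gap is in the constancy of the components --- which, to be fair, is precisely the part the paper asserts rather than proves. You invoke \eqref{pin41}, \(\dot{\alpha}=0\), but that equation belongs to the ansatz \eqref{pin38} with no \(u^{\mu}\)-component; for \(X^{\mu}=\gamma u^{\mu}+\alpha e^{\mu}\) the governing system is \eqref{pin51}--\eqref{pin54}, in which \eqref{pin41} is replaced by \eqref{pin53}, \(\dot{\alpha}=\hat{\gamma}\), so your ``analogous relations for \(\gamma\)'' become circular: \(\hat{\gamma}=0\) if and only if \(\dot{\alpha}=0\). The argument does close when \(\Theta\neq0\): subtracting \eqref{pin52} from \eqref{pin51} and using \eqref{pin54} gives \(\Theta\gamma=0\) in the flat, steady case, hence \(\gamma=0\), and then \(\hat{\alpha}=\dot{\alpha}=0\) together with the LRS fact \(\delta_{\mu}\alpha=0\) makes \(\alpha\) constant. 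But the surviving flat LRS II configuration allows \(\Theta=0\) (then \(\rho=p=0\), e.g.\ the Minkowski case), and there the system \eqref{pin51}--\eqref{pin54} together with \(\delta_{\mu}\alpha=\delta_{\mu}\gamma=0\) admits \(\alpha=c\tau+\alpha_0\), \(\gamma=c\chi+\gamma_0\) with \(c\neq0\) --- a boost-type Killing field with manifestly non-constant components. So projecting the Killing equation does not ``force'' \(\dot{\alpha}=0\) in that branch; one must either restrict to the ansatz \(\gamma=0\), \(m^{\mu}=0\), where your argument (and the paper's) is complete, or give the \(\Theta=0\), \(\gamma\neq0\) branch a separate treatment, which neither your proposal nor the paper supplies.
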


While the applications here do not yield non-flat geometries, a more general form of the Ricci tensor, when applied to the class of solutions in this section, will certainly demonstrate the applicability of the results we have obtained throughout this work.

\section{Summary and discussion}\label{soc8}

This work evolved out of an interest to employ the \(1+1+2\) and \(1+3\) covariant formalisms in the study of Ricci soliton structures on embedded hypersurfaces in spacetimes. As the geometry of Ricci solitons is well understood with a wealth of literature on the subject, there is potential application to the geometric classification of black hole horizons. 

In this work, we have carried out a detailed study of a particular class of locally symmetric embedded hypersurfaces in spacetimes with non-negative scalar curvature, admitting a \(1+1+2\) spacetime decomposition. This formalism has the advantage of bringing out the intricate details of the covariant quantities spacifying the spacetimes (or subsets thereof). As a first step, we prescribed the form of the Ricci tensor on the hypersurfaces and computed the associated curvature quantities. We computed the Ricci tensor for general hypersurfaces in \(1+1+2\) spacetimes and then specified the conditions under which the Ricci tensor for the general case reduces to that of the specified case we considered in this work.

First, we provided a characterisation of the hypersurfaces being considered. The locally symmetric condition implies conformal flatness. It is shown that a locally symmetric hypersurface embedded in a \(1+1+2\) spacetime, specified by the \eqref{pa1} is either an Einstein space or, the hypersurface is non-twisting with the sheet expansion vanishing. Properties of these cases were then briefly considered. The check for whether a hypersurface is Einstein or not reduces to a simple equation in a few of the matter and geometric variables. In particular, it reduces to whether or not 

\begin{eqnarray*}
\frac{3}{4}\Sigma\left(\frac{2}{3}\Theta-\Sigma\right)-2\Omega^2+\frac{1}{4}\Pi=0
\end{eqnarray*}
is satisfied. 

The components of the Ricci tensor were also shown to be covariantly constant, i.e. the hypersurface is of constant scalar curvature, and that the scalar curvature is bounded above by the energy density. Specifically, it was demonstrated that the scalar curvature has an upper bound as \(^3R\leq2\rho\), and hence \(0\leq\ ^3R\ \leq2\rho\). In essence we deal with metrics of bounded constant scalar curvature.

We then went on to consider the case in which a hypersurface admits a Ricci soliton structure. Solutions are considered for the case where the soliton field is parallel to the preferred spatial direction, as well as those also having components along the unit tangent direction orthogonal to the hypersurface. The nature of the soliton is determined by the eigenvalues of the Ricci tensor. This in turn determines the direction of the soliton field is the soliton field has component only along one of the preferred unit directions.

It was further considered the case in which the soliton field is a conformal Killing vector field for the induced metric on the hypersurface. It was shown that in this case the the soliton field is a homothetic Killing vector for the induced metric on the hypersurface, and that the hypersurface is of Einstein type. In the case that the Ricci scalar is strictly positive, the Ricci soliton is classified.

If the soliton field extends as a conformal Killing vector field to the metric of the ambient spacetime, then it was demonstrated that the quantity \((1/3)\Theta+\Sigma\) either vanishes or otherwise the soliton field is null. And if the soliton field is a Killing vector field, then the soliton was shown to be flat. The flat geometry in this case is a consequence of the soliton field being a Killing vector field, a well known fact. Otherwise, one could possibly have non-flat examples.

As another result for the case with \((1/3)\Theta+\Sigma=0\), it is shown that if the hypersurface is simultaneously rotating and expanding, the anisotropic stress cannot vanish on the hypersurface. A non-trivial case (in the sense that the type of soliton field is not specified and the hypersurface is not necessarily flat) was provided. It then followed that the sheet expansion will necessarily vanish on the soliton.

As one would expect, once a hypersurface admits a Ricci soliton structure, the geometry of the hypersurface is restricted and even more so when the choice of the soliton field is specified. We emphasize that, if one can find the soliton field and the constant specifying the nature of the soliton, the soliton equations have to be checked against the consistency of the field equations on the hypersurface as well when studying existence of Ricci soliton structure on embedded hypersurfaces. In general, this might not be possible, and possible only if the spacetime or class of spacetimes is specified.

A simple application was carried out against spacetimes with a high degree of symmetry, those exhibiting local rotational symmetry (LRS spacetimes). It turns out that the upper bound on the scalar curvature, expressed as a bound on the ratio of the rotation to the expansion scalar, places very strong constraints on the class of hypersurfaces considered in this work that can be admitted by LRS spacetimes. In particular it was shown that in this class of spacetimes, all hypersurfaces of type considered in this work is flat, and can be admitted by these spacetimes only if both rotation and spatial twist vanish simultaneously. And if they do admit a Ricci soliton structure the soliton will be steady, with the components of the soliton field being constants.

In subsequent works we seek to apply our approach to Ricci soliton structure on more general hypersurfaces in \(1+1+2\) decomposed spacetimes. Other possible extensions of this work could be studying Ricci soliton structure on general Lorentzian manifolds in the \(1+3\) covariant setting.

\section*{Acknowledgements}

AS acknowledges that this work was supported by the IBS Center for Geometry and Physics, Pohang University of Science and Technology, Grant No. IBS-R003-D1 and the First Rand Bank, through the Department of Mathematics and Applied Mathematics, University of Cape Town, South Africa. PKSD acknowledges support from the First Rand bank, South Africa. RG acknowledges support for this research from the National Research Foundation of South Africa.

\end{document}